\definecolor{DarkGreen}{rgb}{0.1,0.5,0.1}
\renewcommand*{\backref}[1]{}
\renewcommand*{\backrefalt}[4]{%
    \ifcase #1 (Not cited.)%
    \or        (Cited on page~#2)%
    \else      (Cited on pages~#2)%
    \fi}
\algrenewcommand\algorithmicindent{0.75em}
\newcommand{\algmargin}{\the\ALG@thistlm}
\newlength{\whilewidth}
\algnewcommand{\parState}[1]{\State%
\parbox[t]{\dimexpr\linewidth-\algmargin}{\strut #1\strut}}
\let\oldeg\myeg
\renewcommand{\myeg}{\oldeg\normalfont}
\newtheorem{example}{Example}
\let\oldq\myq
\renewcommand{\myq}{\oldq
}
\let\oldprop\myprop
\renewcommand{\myprop}{\oldprop
}
\let\oldi\myi
\renewcommand{\myi}{\oldi
}
\newcommand{\Car}[1]{\md_{#1}}
\newcommand{\crt}{\text{CRT}}
\newcommand{\ma}{{\mathcal A}}
\newcommand{\md}{{\mathcal D}}
\newcommand{\ml}{{\mathcal L}}
\newcommand{\mf}{{\mathcal F}}
\newcommand{\mn}{{\mathcal N}}
\newcommand{\mm}{{\mathcal M}}
\newcommand{\mo}{{\mathcal O}}
\newcommand{\mP}{{\mathcal P}}
\newcommand{\mtra}{\text{MTRA}}
\newcommand{\nxt}{{\text NEXT}}
\newcommand{\others}{{\text others}}
\newcommand{\pa}{\text{Pa}}
\newcommand{\newprojs}[3]{\ensuremath{{#1}_{#2\perp#3}}}
\newcommand{\projs}[2]{\ensuremath{{#1}_{\perp#2}}}
\newcommand{\proj}[3]{\ensuremath{{#1}_{\perp#2,#3}}}
\newcommand{\newproj}[4]{\ensuremath{{#1}_{#2\perp#3,#4}}}
\newcommand{\vu}{{\bf{u}}}
\newcommand{\vx}{{\bf{x}}}
\newcommand{\vy}{{\bf{y}}}
\newcommand{\vz}{{\bf{z}}}
\title{Sequential Mechanisms for Multi-type Resource Allocation}
\author[1]{Sujoy Sikdar}
\author[2]{Xiaoxi Guo}
\author[2]{Haibin Wang}
\author[3]{Lirong Xia}
\author[4]{Yongzhi Cao}
\affil[1]{Binghamton University\\
	{\small\texttt{ssikdar@binghamton.edu}}}
\affil[2]{Peking University\\
	{\small\texttt{guoxiaoxi@pku.edu.cn, beach@pku.edu.cn}}}
\affil[3]{Rensselaer Polytechnic Institute\\
	{\small\texttt{xial@cs.rpi.edu}}}
\affil[4]{Peking University\\
	{\small\texttt{caoyz@pku.edu.cn}}}
\begin{document}

\maketitle

\begin{abstract}
Several resource allocation problems involve multiple types of resources, with a different agency being responsible for ``locally'' allocating the resources of each type, while a central planner wishes to provide a guarantee on the properties of the final allocation given agents' preferences. We study the relationship between properties of the local mechanisms, each responsible for assigning all of the resources of a designated type, and the properties of a {\em sequential mechanism} which is composed of these local mechanisms, one for each type, applied sequentially, under {\em lexicographic preferences}, a well studied model of preferences over multiple types of resources in artificial intelligence and economics. We show that when preferences are $O$-legal, meaning that agents share a common importance order on the types, sequential mechanisms satisfy the desirable properties of anonymity, neutrality, non-bossiness, or Pareto-optimality if and only if every local mechanism also satisfies the same property, and they are applied sequentially according to the order $O$. Our main results are that under $O$-legal lexicographic preferences, every mechanism satisfying strategyproofness and a combination of these properties must be a sequential composition of local mechanisms that are also strategyproof, and satisfy the same combinations of properties.
\end{abstract}

\section{Introduction}
Consider the example of a hospital where patients must be allocated surgeons and nurses with different specialties, medical equipment of different types, and a room~\cite{Huh13:Multiresource}. This example illustrates multi-type resource allocation problems (\mtra{}s), first introduced by~\citet{Moulin95:Cooperative}, where there are $p\ge 1$ types of {\em indivisible} items which are not interchangeable, and a group of agents having heterogeneous preferences over receiving combinations of an item of each type. The goal is to design a mechanism which allocates each agent with a {\em bundle} consisting of an item of each type. 

Often, a different agency is responsible for the allocation of each type of item in a distributed manner, using possibly different {\em local} mechanisms, while a central planner wishes that the mechanism composed of these local mechanisms satisfies certain desirable properties. For example, different departments may be responsible for the allocation of each type of medical resources, while the hospital wishes to deliver a high standard of patient care and satisfaction given the patients' preferences and medical conditions; in an enterprise, clients have heterogeneous preferences over cloud computing resources like computation and storage~\cite{Ghodsi11:Dominant,Ghodsi12:Multi,Bhattacharya13:Hierarchical}, possibly provided by different vendors; in a university, students must be assigned to different types of courses handled by different departments; in a seminar class, the research papers and time slots~\cite{Mackin2016:Allocating} may be assigned separately by the instructor and a teaching assistant respectively, and in rationing~\cite{elster1992local}, different agencies may be responsible for allocating different types of rations such as food and shelter.

Unfortunately, as~\citet{Svensson99:Strategy-proof} shows, even when there is a single type of items and each agent is to be assigned a single item, serial dictatorships are the only strategyproof mechanisms which are {\em non-bossy}, meaning that no agent can falsely report her preferences to change the outcome without also affecting her own allocation, and {\em neutral}, meaning that the outcome is independent of the names of the items. In a serial dictatorship, agents are assigned their favorite remaining items one after another according to a fixed priority ordering of the agents.~\citet{Papai01:Strategyproof} shows a similar result for the {\em multiple assignment problem}, where agents may be assigned more than one item, that the only mechanisms which are strategyproof, non-bossy, and {\em Pareto-optimal} are {\em sequential dictatorships}, where agents pick a favorite remaining item one at a time according to a hierarchical picking sequence, where the next agent to pick an item depends only on the allocations made in previous steps. Pareto-optimality is the property that there is no other allocation which benefits an agent without making at least one agent worse off. More recently,~\citet{hosseini2019multiple} show that even under lexicographic preferences, the only mechanisms for the multiple assignment problem that are strategyproof, non-bossy, neutral and Pareto-optimal are serial dictatorships with a quota for each agent.

~\citet{Mackin2016:Allocating} study \mtra{}s in a slightly different setting to ours: a monolithic central planner controls the allocation of all types of items. They characterize serial dictatorships under the unrestricted domain of strict preferences over bundles with strategyproofness, non-bossiness, and type-wise neutrality, a weaker notion of neutrality where the outcome is independent of permutations on the names of items within each type. Perhaps in light of this and other negative results described above, there has been little further work on strategyproof mechanisms for \mtra{}s. This is the problem we address in this paper.

We study the design of strategyproof sequential mechanisms for \mtra{}s with $p\ge 2$ types, which are composed of $p$ local mechanisms, one for each type, applied sequentially one after the other, to allocate all of the items of the type, under the assumption that agents' preferences are {\em lexicographic} and $O$-legal.

For ~\mtra{}s, lexicographic preferences are a natural, and well-studied assumption for reasoning about ordering alternatives based on multiple criteria in social science~\cite{Gigerenzer96:Reasoning}. In artificial intelligence, lexicographic preferences have been studied extensively, for \mtra{}s~\cite{Sikdar2017:Mechanism,sikdar2019mechanism,Sun2015:Exchange,wang2020multi,guo2020probabilistic}, multiple assignment problems~\cite{hosseini2019multiple,Fujita2015:A-Complexity}, voting over multiple issues~\cite{Lang09:Sequential,Xia11:Strategic}, and committee selection~\cite{sekar2017condorcet}, since lexicographic preferences allow reasoning about and representing preferences in a structured and compact manner. In \mtra{}s, lexicographic preferences are defined by an importance order over the types of items, and local preferences over items of each type. The preference relation over any pair of bundles is decided in favor of the bundle that has the more preferred item of the most important type at which the pair of bundles differ, and this decision depends only on the items of more important types.

In several problems, it is natural to assume that every agent shares a common importance order. For example, when rationing~\cite{elster1992local}, it may be natural to assume that every agent thinks food is more important than shelter, and in a hospital~\cite{Huh13:Multiresource}, all patients may consider their allocation of surgeons and nurses to be more important than the medical equipment and room. $O$-legal lexicographic preference profiles, where every agent has a common importance order $O$ over the types, have been studied recently by~\cite{Lang09:Sequential,Xia11:Strategic} for the multi-issue voting problem. When agents' preferences are $O$-legal and lexicographic, it is natural to ask the following questions about sequential mechanisms that decide the allocation of each type sequentially using a possibly different local mechanism according to $O$, which we address in this paper: (1) {\em if every local mechanism satisfies property $X$, does the sequential mechanism composed of these local mechanisms also satisfy $X$?}, and (2) {\em what properties must every local mechanism satisfy so that their sequential composition satisfies property $X$?}

\subsection{Contributions}
For $O$-legal preferences, a property $X\in\{$anonymity, type-wise neutrality, non-bossiness, monotonicity, Pareto-optimality$\}$, and any sequential mechanism $f_O=(f_1,\dots,f_p)$ which applies each local mechanism $f_i$ one at a time according to the importance order $O$, we show in~\Cref{thm:localtoglobal} and~\Cref{thm:globaltolocal} that $f_O$ satisfies $X$ if and only if every local mechanism it is composed of satisfies $X$.

However, sequential compositions of locally strategyproof mechanisms are not guaranteed to be strategyproof, which raises the question: under what conditions are sequential mechanisms strategyproof? We begin by showing in~\Cref{prop:localspnotOlegal}, that when agents preferences are lexicographic, but agents have different importance orders, sequential mechanisms composed of locally strategyproof mechanisms are, unfortunately, not guaranteed to be strategyproof. In contrast, we show in~\Cref{prop:localsp2sp} that sequential composition of strategyproof mechanisms are indeed strategyproof when either: (1) agents' preferences are separable and lexicographic, even when different agents may have different importance orders, or (2) agents' preferences are lexicographic and $O$-legal and all of the local mechanisms are also non-bossy.

Our main results characterize the class of mechanisms that satisfy strategyproofness, along with different combinations of non-bossiness, neutrality, and Pareto-optimality under $O$-legal preferences as $O$-legal sequential mechanisms. We show:
\begin{itemize}[wide,topsep=0pt,labelindent=0pt]
\item In~\Cref{thm:crnets}, that under $O$-legal lexicographic preferences, the class of mechanisms satisfying strategyproofness and non-bossiness is exactly the class of mechanisms that can be {\em decomposed} into multiple locally strategyproof and non-bossy mechanisms, one for each combination of type and allocated items of more important types. This class of mechanisms is exactly the class of $O$-legal {\em conditional rule nets} (CR-nets)~\cite{Lang09:Sequential};
\item In~\Cref{thm:char}, that a mechanism is strategyproof, non-bossy, and type-wise neutral if and only if it is an $O$-legal sequential composition of serial dictatorships;
\item In~\Cref{thm:char2}, that a mechanism is strategyproof, non-bossy, and Pareto-optimal if and only if it is an $O$-legal CR-net composed of serial dictatorships. 
\end{itemize}

Finally, we show that despite the negative result in~\Cref{prop:localspnotOlegal} that when agents' preferences do not share a common importance order on the types, sequential compositions of locally strategyproof mechanisms may not satisfy strategyproofness, we show in Theorem~\ref{thm:manipulationhard}, that computing beneficial manipulations w.r.t. a sequential mechanism is NP-complete.

\section{Related Work and Discussion}
The \mtra{} problem was introduced by~\citet{Moulin95:Cooperative}. More recently, it was studied by~\citet{Mackin2016:Allocating}, who characterize the class of strategyproof and non-bossy mechanisms under the unrestricted domain of strict preferences over bundles as the class of serial dictatorships. However, as they note, it may be unreasonable to expect agents to express preferences as complete rankings over all possible bundles, besides the obvious communication and complexity issues arising from agents' preferences being represented by complete rankings.

The literature on characterizations of strategyproof mechanisms~\cite{Svensson99:Strategy-proof,Papai01:Strategyproof,hosseini2019multiple} for resource allocation problems belong to the line of research initiated by the famous Gibbard-Satterthwaite Theorem~\cite{Gibbard73:Manipulation,Satterthwaite75:Strategy} which showed that dictatorships are the only strategyproof voting rules which satisfy non-imposition, which means that every alternative is selected under some preference profile. Several following works have focused on circumventing these negative results by identifying reasonable and natural restrictions on the domain of preferences. For voting,~\cite{Moulin80:Strategy} provide non-dictatorial rules satisfying strategyproofness and non-imposition under single-peaked~\cite{Black48:Rationale} preferences. Our work follows in this vein and is closely related to the works by~\citet{LeBreton99:Separable}, who assume that agents' preferences are separable, and more recently,~\citet{Lang09:Sequential} who consider the multi-issue voting problem under the restriction of $O$-legal lexicographic preferences, allowing for conditional preferences given by CP-nets similar to our work.~\citet{Xia10:Strategy} consider a weaker and more expressive domain of lexicographic preferences allowing for conditional preferences. Here, agents have a common importance order on the issues, and the agents preferences over any issue is conditioned only on the outcome of more important issues. They characterize the class of voting rules satisfying strategyproofness and non-imposition as being exactly the class of all CR-nets. CR-nets define a hierarchy of voting rules, where the voting rule for the most important issue is fixed, and the voting rule for every subsequent issue depends only on the outcome of the previous issues. Similar results were shown earlier by~\cite{Barbera93:Generalized,Barbera97:Voting,Barbera91:Voting}.

In a similar vein, ~\citet{Sikdar2017:Mechanism,sikdar2019mechanism} consider the multi-type variant of the classic housing market~\cite{Shapley74:Cores}, first proposed by~\citet{Moulin95:Cooperative}, and~\citet{Fujita2015:A-Complexity} consider the variant where agents can receive multiple items. 
These works circumvent previous negative results on the existence of strategyproof and core-selecting mechanisms under the assumption of lexicographic extensions of CP-nets, and lexicographic preferences over bundles consisting of multiple items of a single type respectively.~\citet{wang2020multi,guo2020probabilistic} study \mtra{}s with divisible and indivisible items, and provide mechanisms that are fair and efficient under the notion of stochastic dominance by extending the famous probabilistic serial~\cite{Bogomolnaia01:New} and random priority~\cite{Abdulkadiroglu98:Random} mechanisms, and show that while their mechanisms do not satisfy strategyproof in general, under the domain restriction of lexicographic preferences, strategyproofness is restored, and stronger notions of efficiency can be satisfied.



\section{Preliminaries}
A {\em multi-type resource allocation problem (\mtra{})}~\cite{Mackin2016:Allocating}, is given by a tuple $(N,M,P)$. Here, \begin{enumerate*}[label=(\arabic*)]\item $N=\{1,\dots,n\}$ is a set of agents, \item $M=D_1\cup\dots\cup D_p$ is a set of items of $p$ types, where for each $i\le p$, $D_i$ is a set of $n$ items of type $i$, and \item $P=(\succ_j)_{j\le n}$ is a {\em preference profile}, where for each $j\le n$, $\succ_j$ represents the preferences of agent $j$ over the set of all possible {\em bundles} $\md=D_1\times\dots\times D_p$\end{enumerate*}. For any type $i\le p$, we use $k_i$ to refer to the $k$-th item of type $i$, and we define $T=\{D_1,\dots,D_p\}$. 
We also use $D_{<i}$ to refer to the set of $\{D_1,\dots,D_{i-1}\}$, $D_{>i}$ refers to $\{D_{i+1},\dots,D_p\}$, and $D_{\le i},D_{\ge i}$ are in the same manner.
For any profile $P$, and agent $j\le n$, we define $P_{-j}=(\succ_k)_{k\le n, k\neq j}$, and $P=(P_{-j},\succ_j)$.

\paragraph{Bundles.} Each bundle $\vx\in\md$ is a $p$-vector, where for each type $i\le p$, $[\vx]_i$ denotes the item of type $i$. We use $a \in \vx$ to indicate that bundle $\vx$ contains item $a$. For any $S\subseteq T$, we define $\Car{S}=\times_{D\in S}D$, and $-S=T\setminus S$. For any $S\subseteq T$, any bundle $\vx\in\Car{S}$, for any $D\in -S$, and item $a\in D$, $(a,\vx)$ denotes the bundle consisting of $a$ and the items in $\vx$, and similarly, for any $U\subseteq -S$, and any bundle $\vy\in\Car{U}$, we use $(\vx,\vy)$ to represent the bundle consisting of the items in $\vx$ and $\vy$. For any $S\subseteq T$, we use $\projs{\vx}{S}$ to denote the items in $\vx$ restricted to the types in $S$.

\paragraph{Allocations.}
An {\em allocation} $A:N\to\md$ is a one-to-one mapping from agents to bundles such that no item is assigned to more than one agent. $\ma$ denotes the set of all possible allocations. Given an allocation $A\in\ma$, $A(j)$ denotes the bundle allocated to agent $j$. For any $S\subseteq T$, we use $\projs{A}{S}:N\to\Car{S}$ to denote the allocation of items restricted to the types in $S$.

\paragraph{CP-nets and $O$-legal Lexicographic Preferences.}
An {\em acyclic CP-net}~\cite{Boutilier04:CP} $\mn$ over $\md$ is defined by \begin{enumerate*}[label=(\roman*)]\item a directed graph $G=(T,E)$ called the {\em dependency graph}, and \item for each type $i\le p$, a {\em conditional preference table} $CPT(D_i)$ that contains a linear order $\succ^{\vx}$ over $D_i$ for each $\vx\in\Car{Pa(D_i)}$, where $Pa(D_i)$ is the set of types corresponding to the parents of $D_i$ in $G$.\end{enumerate*} A CP-net $\mn$ represents a partial order over $\md$ which is the transitive closure of the preference relations represented by all of the $CPT$ entries which are $\{(a_i,\vu,\vz) \succ (b_i,\vu,\vz):i \le p, a_i,b_i\in D_i,\vu\in\Car{Pa(D_i)},\vz\in\Car{-Pa(D_i)\setminus \{D_i\}}\}$.

Let $O=[D_1\rhd\dots\rhd D_p]$ be a linear order over the types. A CP-net is $O$-legal if there is no edge $(D_i,D_l)$ with $i>l$ in its dependency graph. A {\em lexicographic} extension of an $O$-legal CP-net $\mn$ is a linear order $\succ$ over $\md$, such that for any $i\le p$, any $\vx\in\md_{D_{<i}}$, any $a_i,b_i\in D_i$, and any $\vy,\vz\in \md_{D_{>i}}$, if $a_i\succ^{\vx}b_i$ in $\mn$, then, $(\vx,a_i,\vy)\succ (\vx,b_i,\vz)$. The linear order $O$ over types is called an {\em importance order}, and $D_1$ is the most important type, $D_2$ is the second most important type, etc. We use $\mo$ to denote the set of all possible importance orders over types.

Given an important order $O$, we use $\ml_{O}$ to denote the set of all possible linear orders that can be induced by lexicographic extensions of $O$-legal CP-nets as defined above. A preference relation $\succ\in\ml_{\mo}$ is said to be an $O$-legal lexicographic preference relation, and a profile $P\in\ml_{O}^n$ is an $O$-legal lexicographic profile. An $O$-legal preference relation is {\em separable}, if the dependency graph of the underlying CP-net has no edges. We will assume that all preferences are $O$-legal lexicographic preferences throughout this paper unless specified otherwise.
\begin{figure}
  \centering
  \includegraphics[width=\linewidth]{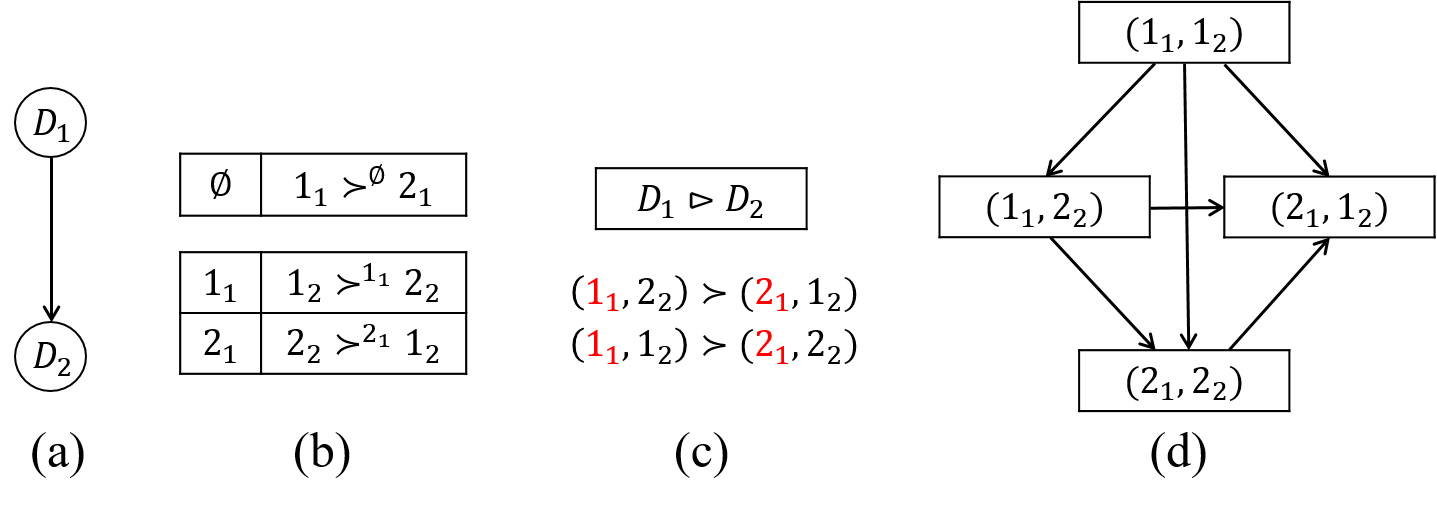}
  \caption{An $O$-legal lexicographic preference with an underlying CP-net, where $O=[D_1\rhd D_2]$.}
  \label{eg:pref}
\end{figure}
\begin{example}
Here we show how to compare bundles under an $O$-legal lexicographic preference with CP-net.
In Figure~\ref{eg:pref}(a) is a dependency graph which shows that $D_2$ depends on $D_1$.
Figure~\ref{eg:pref}(b) is the $CPT$ for both types, which implies $(1_1,1_2)\succ(1_1,2_2),(2_1,2_2)\succ(2_1,1_2)$.
Figure~\ref{eg:pref}(c) gives the importance order $O=[D_1\rhd D_2]$.
With $O$ we can compare some bundles directly.
For example, $(1_1,2_2)\succ(2_1,1_2),(1_1,1_2)\succ(2_1,2_2)$ because the most important type with different allocations is $D_1$ and $1_1 \succ^{\emptyset}2_1$.
Finally, Figure~\ref{eg:pref}(d) shows the relations among all the bundles.
\end{example}
We note that any lexicographic extension of an $O$-legal CP-net according to the order $O$ does not violate any of the relations induced by the original CP-net, and always induces a linear order over all possible bundles unlike CP-nets which may induce partial orders.

For any $O$-legal lexicographic preference relation $\succ$ over $\md$, and given any $\vx\in\md_{D_{<i}}$, we use $\proj{\succ}{D_i}{\vx}$ as the {\em projection} of the relation $\succ$ over $D_i$ given $\vx$, and $\proj{\succ}{D_{\ge i}}{\vx}$ as the projection of $\succ$ over $\{(\vx,\vz):\vz\in\md_{D_{\ge i}}\}$. For convenience, given an allocation $A$, for any $i\le p$, we define $\proj{\succ}{D_i}{A}$ and $\proj{\succ}{D_{\ge i}}{A}$ similarly, where the preferences are projected based on the allocation of items of types that are more important than $i$, and given an $O$-legal lexicographic profile $P$, we define $\proj{
P}{D_i}{A}$ and $\proj{P}{D_{\ge i}}{A}$ similarly, by projecting the preferences of every agent. We just leave out $\vx$ (and similarly, $A$) if $i=1$. We use $D_{-i}$ to stand for the set of all types except $D_i$.


\paragraph{Sequential and Local Mechanisms.}
An \emph{allocation mechanism} $f:\mP\to\ma$ maps $O$-legal preference profiles to allocations. Given an importance order $O=[D_1\rhd\dots\rhd D_p]$, an $O$-legal sequential mechanism $f_O=(f_1,\dots,f_p)$ is composed of $p$ {\em local} mechanisms, that are applied one after the other in $p$ rounds, where in each round $i\le p$, a local mechanism $f_i$ allocates all of the items of $D_i$ given agents' projected preferences over $D_i$ conditioned on the partial allocation in previous rounds. 

\paragraph{Desirable Properties.} An allocation mechanism $f$ satisfies:
\begin{itemize}[wide,topsep=0pt,labelindent=0pt]
\item \emph{anonymity}, if for any permutation $\Pi$ on the names of agents, and any profile $P$, $f(\Pi(P)) = \Pi(f(P))$;
\item \emph{type-wise neutrality}, if for any permutation $\Pi=(\Pi_1,\dots,\Pi_p)$, where for any $i\le p$, $\Pi$ only permutes the names of the items of type $i$ according to a permutation $\Pi_i$, and any profile $P$, $f(\Pi(P)) = \Pi(f(P))$;
\item \emph{Pareto-optimality}, if for every allocation $A$ such that there exists an agent $j$ such that $A(j)\succ_j f(P)(j)$, there is another agent $k$ such that $f(P)(k)\succ_j A(k)$.
\item \emph{non-bossiness}, if no agent can misreport her preferences and change the allocation of other agents without also changing her own allocation, i.e. there does not exist any pair $(P,\succ'_j)$ where $P$ is a profile and $\succ'_j$ is the misreported preferences of agent $j$ such that $f(P)(j) = f(P_{-j},\succ'_j)(j)$ and for some agent $k \neq j$, $f(P)(k) \neq f(P_{-j},\succ'_j)(k)$.
\item \emph{non-bossiness of more important types}, if no agent $j$ can misreport her local preferences for less important types and change the allocation of more important types to other agents without also changing her own allocation of more important types. i.e. for every profile $P$, every agent $j \le n$, every type $D_i, i \le p$, and every misreport of agent $j$'s preferences $\succ'_j$ where for every $h < i$, every $u \in Par(D_{h})$, $\newproj{\succ'}{j}{D_{h}}{u} = \newproj{\succ}{j}{D_{h}}{u}$, it holds that if for some agent $k \neq j$, $\projs{f(P_{-j},\succ'_j)(k)}{D_{\le i}} \neq \projs{f(P)(k)}{D_{\le i}}$, then $\projs{f(P_{-j},\succ'_j)(j)}{D_{\le i}} \neq \projs{f(P)(j)}{D_{\le i}}$.

\item \emph{monotonicity}, for any agent $j$, any profile $P$, let $\succ'_j$ be a misreport preference such that if $Y\subseteq\md$ is the set of all bundles whose ranks are raised and it holds that for every $\vx,\vz\in Y$, $\vx\succ_j\vz \implies \vx\succ'_j\vz$, then, $f(P_{-j}, \succ'_j)(j)\in \{f(P)(j)\}\cup Y$.



\item \emph{strategyproofness}, if no agent has a beneficial manipulation i.e. there is no pair $(P,\succ'_j)$ where $P$ is a profile and $\succ'_j$ is a manipulation of agent $j$'s preferences such that $f(P_{-j},\succ'_j)(j) \succ_j f(P)(j)$.
\end{itemize}

\section{Properties of Sequential Mechanisms Under Lexicographic Preferences}

\begin{restatable}{thm}{thmlocaltoglobal}\label{thm:localtoglobal}
	For any importance order $O\in\mo$, any $X \in \{$anonymity, type-wise neutrality, non-bossiness, monotonicity, Pareto-optimality$\}$, and $f_O=(f_1,\dots,f_p)$ be any $O$-legal sequential mechanism. Then, for $O$-legal preferences, if for every $i\le p$, the local mechanism $f_{i}$ satisfies $X$, then $f_O$ satisfies $X$.
\end{restatable}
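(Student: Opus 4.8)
The plan is to fix the importance order $O$ and treat each property $X$ separately, in every case by induction on the round index $i\in\{1,\dots,p\}$, tracking the partial allocation committed by $f_O$ after round $i$. The workhorse is an elementary observation about $O$-legal lexicographic preferences together with the definition of $f_O$: if $A=f_O(P)$, then round $i$ of $f_O$ runs $f_i$ on exactly the projected profile $\proj{P}{D_i}{A}$, which depends only on $\projs{A}{D_{<i}}$, the allocation of the first $i-1$ rounds; and projection commutes with the relevant transformations — for any permutation $\Pi$ of the agents, or of the items within a single type, $\proj{\Pi(P)}{D_i}{\Pi(A)}=\Pi(\proj{P}{D_i}{A})$, and replacing agent $j$'s preference $\succ_j$ by a misreport $\succ'_j$ changes $\proj{P}{D_i}{A}$ only in agent $j$'s local preference over $D_i$.

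For anonymity and type-wise neutrality the claim follows at once. Let $\Pi$ be the corresponding permutation and $A=f_O(P)$; I claim that the partial allocation of $f_O(\Pi(P))$ after round $i$ equals $\Pi(\projs{A}{D_{\le i}})$. This is vacuous for $i=0$; for the step, round $i$ of $f_O(\Pi(P))$ runs $f_i$ on $\proj{\Pi(P)}{D_i}{\Pi(\projs{A}{D_{<i}})}$, which by the commutation observation and the inductive hypothesis equals $\Pi(\proj{P}{D_i}{A})$, so by property $X$ of $f_i$ its output is $\Pi(\projs{A}{D_i})$, giving partial allocation $\Pi(\projs{A}{D_{\le i}})$. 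Taking $i=p$ yields $f_O(\Pi(P))=\Pi(f_O(P))$.

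For non-bossiness, fix $P$, $j$, and a misreport $\succ'_j$ with $f_O(P)(j)=f_O(P')(j)=:\vx^\ast$, where $P'=(P_{-j},\succ'_j)$; write $A=f_O(P)$, $A'=f_O(P')$. Since the type-$i$ item of any agent is committed once and for all in round $i$, we have $[A(j)]_i=[A'(j)]_i=[\vx^\ast]_i$ for every $i$. We prove $\projs{A}{D_{\le i}}=\projs{A'}{D_{\le i}}$ by induction on $i$: given the statement for $i-1$, the round-$i$ calls of $f_O$ on $P$ and on $P'$ feed $f_i$ two profiles that share the same conditioning allocation, differ only in agent $j$'s local preference, and in which agent $j$ receives the same item $[\vx^\ast]_i$; hence non-bossiness of $f_i$ forces the two type-$i$ allocations to agree, and with the hypothesis this gives the statement for $i$. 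At $i=p$ we obtain $A=A'$, so $f_O$ is non-bossy. For Pareto-optimality, suppose some allocation $B$ Pareto-dominates $A=f_O(P)$, and let $i$ be the least type at which $B$ and $A$ differ, so that $\projs{B}{D_{<i}}=\projs{A}{D_{<i}}$. For each agent $k$ we have $B(k)\succeq_k A(k)$ with these two bundles agreeing on $D_{<i}$, so lexicographicity gives that either $[B(k)]_i=[A(k)]_i$, or $[B(k)]_i$ is strictly preferred to $[A(k)]_i$ in agent $k$'s entry of $\proj{P}{D_i}{A}$; and since $\projs{B}{D_i}\neq\projs{A}{D_i}$, the strict case occurs for at least one agent. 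Thus $\projs{B}{D_i}$ Pareto-dominates $\projs{A}{D_i}=f_i(\proj{P}{D_i}{A})$ with respect to $\proj{P}{D_i}{A}$, contradicting Pareto-optimality of $f_i$.

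The main obstacle is monotonicity. Here a deviation by agent $j$ can change the type-$i$ items of \emph{other} agents, since $f_i$ is not assumed non-bossy, so the allocations conditioned on under $P$ and under $P'=(P_{-j},\succ'_j)$ need not agree and the later-round profiles may differ in more than agent $j$'s coordinate; one therefore cannot simply rerun the non-bossiness induction. The plan is still an induction on rounds, maintaining the weaker invariant that agent $j$'s partial bundle under $P'$ after round $i$ either equals the $D_{\le i}$-restriction of $\vx^\ast=f_O(P)(j)$ or is the $D_{\le i}$-restriction of some bundle in $Y$, the set of bundles whose rank $\succ'_j$ raises relative to $\succ_j$. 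This is supported by a bookkeeping lemma decomposing $Y$ into the raised-item sets of the successive projected local preferences of agent $j$ — using that a change confined to the conditional tables of less important types never moves the items of more important types, and that the order-preservation hypothesis on $Y$ descends to each such local comparison. Applying monotonicity of $f_i$ at the first round in which agent $j$'s item is affected pins that item into the corresponding local raised set, and analysing the remaining rounds (where agent $j$'s coordinate is either already forced to agree with $\vx^\ast$ or forced into the projection of $Y$) yields $f_O(P')(j)\in\{\vx^\ast\}\cup Y$; carrying out this case analysis rigorously is the crux of the proof.
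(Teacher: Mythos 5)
Your treatments of anonymity, type-wise neutrality, non-bossiness, and Pareto-optimality are correct and essentially the paper's own arguments: the first two are the same projection-commutes-with-permutation induction (the paper phrases it as induction on the number of types via $g=(f_2,\dots,f_p)$ rather than on rounds), your forward induction on $\projs{A}{D_{\le i}}=\projs{A'}{D_{\le i}}$ for non-bossiness is the paper's first-point-of-difference contradiction read in the other direction, and your least-differing-type argument for Pareto-optimality is an equivalent reformulation of the paper's induction on $p$.

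The gap is monotonicity. For that case you give only a plan: an invariant, an unproved ``bookkeeping lemma'' relating $Y$ to the raised sets of agent $j$'s successive projected local preferences, and the admission that carrying out the case analysis ``is the crux of the proof.'' Nothing is actually established, so one of the five properties in the statement is left unproven. The paper's proof of this case is exactly the round-by-round recursion you gesture at: monotonicity of $f_1$ gives $[f_O(P')(j)]_1\in\{[f_O(P)(j)]_1\}\cup Y^{D_1}$; if the type-$1$ item strictly improves, $O$-legal lexicographicity places every completion $(\vx_1,\vz)$ in $Y$, hence $f_O(P')(j)\in Y$; otherwise one recurses with the conditional raised set $Y^{D_2\mid \vx_1}$, and so on. The ``bookkeeping'' you postpone is precisely the paper's ``easy to see'' step, namely that agent $j$'s projected misreport over $D_i$, conditioned on the unchanged partial bundle, raises the items of $Y^{D_i\mid\cdot}$ while preserving their relative order. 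The obstacle you flag---that since the $f_i$ are not assumed non-bossy, other agents' items of more important types, and hence their round-$i$ projected preferences, may change, so the two round-$i$ profiles need not differ only in agent $j$'s coordinate---is a genuine subtlety, and the paper's recursion passes over it rather than confronting it; but identifying the difficulty does not discharge it, and your proposal supplies neither the missing lemma nor the concluding case analysis. As submitted, the monotonicity case is a statement of intent, not a proof.
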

\begin{proof}(Sketch)
	Throughout, we will assume that $O=[D_1\rhd\dots\rhd D_p]$, and that $P$ is an arbitrary $O$-legal preference profile over $p$ types. For any $i\le p$, we define $g_i$ to be the sequential mechanism $(f_1,\dots,f_i)$. The proofs of anonymity and type-wise neutrality are relegated to the appendix.
	
	\noindent{\bf non-bossiness.}
	Let us assume for the sake of contradiction that the claim is false, i.e. there exists a profile $P$, an agent $j$ and a misreport $\succ'_j$ such that for $P' = (\succ_{-j},\succ'_{j})$, $f_O(P)(j) = f_O(P')(j)$, and $f_O(P) \neq f_O(P')$. Then, there is a type $i\le p$ such that, $\projs{f_O(P)}{D_{<i}} = \projs{f_O(P')}{D_{<i}}$ and $\projs{f_O(P)}{D_i} \neq \projs{f_O(P')}{D_i}$. Let $A = \projs{f_O(P)}{D_{<i}}$. Then, there is an agent $k$ such that $f_{i}(\proj{P}{D_i}{A})(k)\neq f_{i}(\proj{P'}{D_i}{A})(k)$. 
	By the choice of $i$, and the assumption that every other agent reports preferences truthfully, $\proj{\succ_j}{D_i}{A}\neq \proj{\succ'_j}{D_i}{A}$. Then, $f_i(\proj{\succ_{-j}}{D_i}{A},\proj{\succ_j}{D_i}{A})(j)= f_i(\proj{\succ_{-j}}{D_i}{A},\proj{\succ'_j}{D_i}{A})(j)$, but $f_i(\proj{\succ_{-j}}{D_i}{A},\proj{\succ_j}{D_i}{A})(k)\neq f_i(\proj{\succ_{-j}}{D_i}{A},\proj{\succ'_j}{D_i}{A})(k)$, a contradiction to our assumption that $f_i$ is non-bossy.

	\noindent{\bf monotonicity.} 
	Let $P'=(P_{-j},\succ'_j)$ be an $O$-legal profile obtained from $P$ and $Y\subseteq\md$ is the set of bundles raising the ranks in $P'$ such that the relative rankings of bundles in $Y$ are unchanged in $P$ and $P'$. 
	For any $Y\subseteq\md$, and any $\vu\in\md_{D_{<i}}$, let $Y^{D_i\mid \vu} = \{x_i: \vx\in Y,x_h=u_h \text{ for all } h\le i-1\}$. 
	It is easy to see that if $\vx_1 = \projs{f_O(P')(j)}{\{D_1\}}$, then it follows from strong monotonicity of $f_1$ that $\vx_1\in\projs{f_O(P)(j)}{\{D_1\}}\cup Y^{D_1}$.
	Now, either $\vx_1\neq\projs{f_O(P)(j)}{\{D_1\}}$, or $\vx_1=\projs{f_O(P)(j)}{\{D_1\}}$. 
	Suppose $\vx_1\neq\projs{f_O(P)(j)}{\{D_1\}}$. 
	Then, by strong monotonicity of $f_1$, $\vx_1\succ\projs{f_O(P)(j)}{\{D_1\}}$. 
	Then, by our assumption of $O$-legal lexicographic preferences, for any $\vz\in\md_{\{D_2,\dots,D_p\}}$, $(\vx_1,\vz)\in Y$. 
	Therefore, $f_O(P')(j)\in Y$. Suppose $\vx_1=\projs{f_O(P)(j)}{\{D_1\}}$, then by a similar argument, $\projs{f_O(P')(j)}{\{D_2\}}\in \{\projs{f_O(P)(j)}{\{D_2\}}\}\cup Y^{D_2\mid (\vx_1)}$. 
	Applying our argument recursively, we get that $f_O(P')(j)\in \{f_O(P)(j)\}\cup Y$.
	
	\noindent{\bf Pareto-optimality.}
	Suppose the claim is true for $p\le k$ types. Let $P$ be an $O$-legal lexicographic profile over $k+1$ types, and $f_O=(f_i)_{i\le k+1}$ is a sequential composition of Pareto-optimal local mechanisms. Suppose for the sake of contradiction that there exists an allocation $B$ such that some agents strictly better off compared to $f_O(P)$, and no agent is worse off. Then, by our assumption of lexicographic preferences, for every agent $k$ who is not strictly better off, $B(k) = f_O(P)(k)$, and for every agent $j$ who is strictly better off, one of two cases must hold. (1) $\projs{B(j)}{D_1}\succ_j \projs{f_O(P)(j)}{D_1}$, or (2) $\projs{B(j)}{D_1} = \projs{f_O(P)(j)}{D_1}$. (1): If there exists an agent such that $\projs{B(j)}{D_1}\succ_j \projs{f_O(P)(j)}{D_1}$, this is a contradiction to our assumption that $f_1$ is Pareto-optimal. (2): Suppose $\projs{B(j)}{D_1}=\projs{f_O(P)(j)}{D_1}$ for all agents who are strictly better off. Let $g=(f_2,\dots,f_{k+1})$. W.l.o.g. let agent $1$ strictly prefer $B(1)$ to $f_O(P)(1)$. Then, $g(\proj{P}{D_{\le k+1}\setminus D_1}{\projs{f_O(P)}{D_1}})(1)\succ_1 \projs{B(1)}{D_{\le k+1}\setminus D_1}$, and for every other agent $l\neq 1$, either $g(\proj{P}{D_{\le k+1}\setminus D_1}{\projs{f_O(P)}{D_1}})(l) \succ_l \projs{B(l)}{D_{\le k+1}\setminus D_1}$, or $g(\proj{P}{D_{\le k+1}\setminus D_1}{\projs{f_O(P)}{D_1}})(l) =  \projs{B(l)}{D_{\le k+1}\setminus D_1}$, which is a contradiction to our induction assumption.
\end{proof}

\begin{restatable}{thm}{thmglobaltolocal}\label{thm:globaltolocal}
	For any importance order $O\in\mo$,  $X \in \{$anonymity, type-wise neutrality, non-bossiness, monotonicity, Pareto-optimality$\}$, and $f_O=(f_1,\dots,f_p)$ be any $O$-legal sequential mechanism. For $O$-legal preferences, if $f_O$ satisfies $X$, then for every $i\le p$, $f_{i}$ satisfies $X$.
\end{restatable}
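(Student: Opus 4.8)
The plan is to reduce each of the five properties of a local mechanism $f_i$ to the corresponding global property of $f_O$ through one uniform lifting construction. Fix $O=[D_1\rhd\dots\rhd D_p]$ and $i\le p$. Given any profile $Q=(Q_1,\dots,Q_n)$ of linear orders over the $n$ items of $D_i$ --- the unrestricted single-type domain on which $f_i$ operates --- I will take $P=P(Q)$ to be the \emph{separable} $O$-legal lexicographic profile in which each agent $j$'s underlying CP-net has no edges, her local order over $D_i$ is $Q_j$, and her local order over every other type is a fixed canonical linear order. The crucial point is that, because $P$ is separable, the profile handed to $f_i$ in round $i$ is $Q$ regardless of the partial allocation $A=\projs{f_O(P)}{D_{<i}}$ produced by the first $i-1$ rounds: $\proj{P}{D_i}{A}=Q$ by separability, so $\projs{f_O(P)}{D_i}=f_i(Q)$. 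Thus I never need to determine $A$, and since every such $P$ is $O$-legal, each property assumed of $f_O$ may be invoked on it.

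Anonymity and type-wise neutrality follow immediately: for an agent permutation $\Pi$ the construction is agent-wise, so $\Pi P(Q)=P(\Pi Q)$, and for a permutation $\Pi$ relabeling only the items of $D_i$ by $\pi_i$ separability is preserved, so $\Pi P(Q)=P(\pi_i Q)$; projecting the global identity $f_O(\Pi P)=\Pi f_O(P)$ onto the type-$i$ coordinate then yields $f_i(\Pi Q)=\Pi f_i(Q)$ and $f_i(\pi_i Q)=\pi_i f_i(Q)$ respectively. For monotonicity, a misreport $Q_j\to Q'_j$ of agent $j$ over $D_i$ lifts to the separable misreport $\succ'_j$ obtained from $\succ_j$ by replacing only her $D_i$-component by $Q'_j$; writing $P'=(P_{-j},\succ'_j)$ and $Q'=(Q_{-j},Q'_j)$, a short case analysis on the first coordinate at which two bundles disagree shows that the set of bundles raised from $\succ_j$ to $\succ'_j$ is exactly $Y=\{\vx\in\md:[\vx]_i\in Y_i\}$, where $Y_i\subseteq D_i$ is the set of type-$i$ items raised by $Q_j\to Q'_j$, and that $Y$ inherits the preserved-relative-order hypothesis from $Y_i$; monotonicity of $f_O$ gives $f_O(P')(j)\in\{f_O(P)(j)\}\cup Y$, and projecting onto $D_i$ gives $f_i(Q')(j)\in\{f_i(Q)(j)\}\cup Y_i$. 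For Pareto-optimality I will suppose some allocation $B_i$ of $D_i$ Pareto-dominates $f_i(Q)$ under $Q$, set $A=\projs{f_O(P)}{D_{<i}}$ and $C=\projs{f_O(P)}{D_{>i}}$, and define the allocation $B$ by $B(j)=(A(j),B_i(j),C(j))$ (valid, since $A,B_i,C$ allocate disjoint item sets); as $B(j)$ and $f_O(P)(j)=(A(j),f_i(Q)(j),C(j))$ agree outside $D_i$, the $O$-legal lexicographic comparison of the two collapses to the comparison of $B_i(j)$ with $f_i(Q)(j)$ under $Q_j$, so $B$ Pareto-dominates $f_O(P)$ at $P$, contradicting Pareto-optimality of $f_O$.

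Non-bossiness needs a little more, and is where \emph{full} separability of $P$ (not merely of its $D_i$-component) is used. From a bossy instance of $f_i$ --- some $Q$, agent $j$, and $Q'_j$ with $f_i(Q)(j)=f_i(Q')(j)$ but $f_i(Q)(k)\neq f_i(Q')(k)$ for some $k$ --- I will form $P$ and $P'=(P_{-j},\succ'_j)$ differing only in agent $j$'s $D_i$-component. Rounds $1,\dots,i-1$ of $f_O$ receive identical inputs under $P$ and $P'$, so $\projs{f_O(P)}{D_{<i}}=\projs{f_O(P')}{D_{<i}}$; by separability, rounds $i+1,\dots,p$ receive inputs that do not depend on the partial allocation of $D_{\le i}$ and are therefore also identical under $P$ and $P'$, so $\projs{f_O(P)}{D_{>i}}=\projs{f_O(P')}{D_{>i}}$. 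Combined with $\projs{f_O(P)}{D_i}=f_i(Q)$ and $\projs{f_O(P')}{D_i}=f_i(Q')$, this gives $f_O(P)(j)=f_O(P')(j)$ while $f_O(P)(k)\neq f_O(P')(k)$, contradicting non-bossiness of $f_O$.

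I expect the main obstacle to be precisely the conditioning intrinsic to sequential mechanisms: the profile that $f_i$ sees in round $i$ depends on $\projs{f_O(P)}{D_{<i}}$, which we do not control, and --- for non-bossiness --- a deviation confined to type $i$ could a priori propagate to later rounds through this conditioning. Working with separable profiles severs both dependencies, after which every property transfers simply by projecting the relevant global identity or inequality for $f_O$ onto the type-$i$ coordinate; and since all profiles and misreports built this way are $O$-legal, each appeal to a property of $f_O$ stays within its stated domain.
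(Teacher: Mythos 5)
Your proposal is correct and follows essentially the same route as the paper's proof: embed the single-type instance for $f_i$ into a separable $O$-legal lexicographic profile, use separability so that every round's input (in particular round $i$'s, which then equals $Q$) is independent of the earlier partial allocation, and project the corresponding global property of $f_O$ onto the type-$i$ coordinate. Your write-up is somewhat more uniform and explicit (one lifting construction $P(Q)$, and a worked-out identification of the raised bundle set $Y$ for monotonicity), but the underlying argument for each of the five properties matches the paper's.
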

\begin{proof}(Sketch) We only provide the proof of non-bossiness here. The rest of the proofs are in the appendix.
    
    \noindent{\bf non-bossiness.}
	Assume for the sake of contradiction that $k\le p$ is the most important type such that $f_{k}$ does not satisfy non-bossiness. Then, there exists a preference profile $Q = (\succ^k)_{j \le n}$ over $D_{k}$, and a bossy agent $l$ and a misreport $Q' = (\succ^k_{-l}, \bar\succ^k_{l})$, such that  $f_k(Q')(l) = f_k(Q)(l)$, but $f_{k}(Q') \neq f_{k}(Q)$. Now, consider the $O$-legal separable lexicographic profile $P$, where for any type $i\le p$, the preferences over type $D_i$ is denoted $\projs{P}{D_i}$ and $\projs{P}{D_k}=Q$, and the profile $P'$ obtained from $P$ by replacing $\succ_l$ with $\succ'_l$, which in turn is obtained from $\succ_l$ by replacing $\projs{\succ_l}{D_k}$ with $\bar\succ^k_l$. It is easy to see that $\projs{f_O(P')}{D_{<k}}=\projs{f_O(P)}{D_{<k}}$, and $\projs{f_O(P')(l)}{D_k} = \projs{f_O(P)(l)}{D_k}$, but $\projs{f_O(P')}{D_k} \neq \projs{f_O(P)}{D_k}$, and by our assumption of separable preferences, $\projs{f_O(P')}{D_{>k}}=\projs{f_O(P)}{D_{>k}}$. This implies that $f_O(P')(l)=f_O(P)(l)$, but $f_O(P')\neq f_O(P)$, implying that $f_O$ does not satisfy non-bossiness, which is a contradiction.
\end{proof}

\section{Strategyproofness of Sequential Mechanisms}

A natural question to ask is whether it is possible to design strategyproof sequential mechanisms when preferences are lexicographic, but each agent $j\le n$ may have a possibly different importance order $O_j\in\mo$ over the types, and their preference over $\md$ is $O_j$-legal and lexicographic. A sequential mechanism applies local mechanisms according to some importance order $O\in\mo$ and is only well defined for $O$-legal preferences. When preferences are not $O$-legal, it is necessary to define how to project agents' preferences given a partial allocation when a sequential mechanism is applied. Consider an agent $j$ with $O_j$-legal lexicographic preferences, and a partial allocation $\projs{A}{S}$ for some 
$S\subseteq T$, which allocates $\vx\in \md_S$ to $j$. A natural question to ask is how should agent $j$'s preferences be interpreted over a type $D_i$ which has not been allocated yet. We define two natural ways in which agents' may wish their preferences to be interpreted. We say that an agent is {\em optimistic}, if for any type $D_i\not\in S$, and any pair of items $a_i,b_i\in D_i$, $a_i\succ b_i$ if and only if according to their original preferences $\sup\{\vy\in\md:\vy_k = \vx_k \text{ for every } D_k\in S, \vy_i=a_i\} \succ \sup\{\vy\in\md:\vy_k = \vx_k \text{ for every } D_k\in S, \vy_i=b_i\}$. Similarly, an agent is {\em pessimistic}, if for any type $D_i\not\in S$, and any pair of items $a_i,b_i\in D_i$, $a_i\succ b_i$ if and only if $\inf\{\vy\in\md:\vy_k = \vx_k \text{ for every } D_k\in S, \vy_i=a_i\} \succ \inf\{\vy\in\md:\vy_k = \vx_k \text{ for every } D_k\in S, \vy_i=b_i\}$.

\begin{restatable}{prop}{proplocalspnotOlegal}\label{prop:localspnotOlegal}
For any importance order $O\in\mo$, when the preferences are not $O$-legal, and agents are either optimistic or pessimistic, a sequential mechanism $f_O$ composed of strategyproof mechanisms is not necessarily strategyproof.
\end{restatable}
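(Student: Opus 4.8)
The statement is negative, so the plan is to build an explicit counterexample; I would present it for $p=2$ types (the general case follows by appending types on which all agents share one common strict order, processed last in $O$, so that they do not interact with the projections onto $D_1,D_2$). Fix $O=[D_1\rhd D_2]$, $n=2$ agents, and two items per type, $D_1=\{1_1,2_1\}$, $D_2=\{1_2,2_2\}$. For the local mechanisms I would take serial dictatorships, which are strategyproof: $f_1$ asks agent $2$ for its top type-$1$ item and gives agent $1$ the other; $f_2$ asks agent $1$ for its top type-$2$ item and gives agent $2$ the other. The crux is the profile. Agent $1$ is given an $O$-legal but \emph{non-separable} lexicographic preference: $1_1\succ 2_1$ on $D_1$, while on $D_2$ it prefers $1_2$ to $2_2$ when it holds $1_1$ and $2_2$ to $1_2$ when it holds $2_1$. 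Agent $2$ is given a lexicographic preference with the \emph{reversed} importance order $O_2=[D_2\rhd D_1]$ — hence not $O$-legal — with $1_2\succ 2_2$ on $D_2$ and $2_1\succ 1_1$ on $D_1$, so its bundle ranking is $(2_1,1_2)\succ(1_1,1_2)\succ(2_1,2_2)\succ(1_1,2_2)$.

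Next I would trace the truthful run under the optimistic interpretation. The honest optimistic projection of agent $2$ onto $D_1$ ranks $2_1$ first (its best bundle containing $2_1$ beats its best bundle containing $1_1$), so in round $1$ the dictator agent $2$ takes $2_1$ and agent $1$ is forced onto $1_1$; conditioned on $1_1$, agent $1$ prefers $1_2$, so in round $2$ the dictator agent $1$ takes $1_2$ and agent $2$ is left with $2_2$. Truthfully, then, agent $2$ receives $(2_1,2_2)$, its third-ranked bundle. I would then exhibit the manipulation: agent $2$ reports instead an (in fact $O$-legal) lexicographic preference with $D_1$ most important, $1_1$ at the top of $D_1$, and $1_2$ at the top of $D_2$. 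Now agent $2$'s reported projection onto $D_1$ ranks $1_1$ first, so in round $1$ agent $2$ grabs $1_1$ and agent $1$ is pushed onto $2_1$; conditioned on $2_1$, agent $1$'s order over $D_2$ flips and it prefers $2_2$, so in round $2$ the dictator agent $1$ takes $2_2$ and agent $2$ receives $1_2$. Agent $2$ now holds $(1_1,1_2)$, which its true preference ranks second — strictly above $(2_1,2_2)$, since type $2$ dominates for agent $2$ and it has obtained its favorite type-$2$ item. Thus $f_O$ is manipulable although $f_1$ and $f_2$ are strategyproof. Finally, the manipulated profile consists entirely of $O$-legal reports, so the manipulated run does not depend on the optimistic-versus-pessimistic choice at all; and for agent $2$'s honest preference the pessimistic projection onto $D_1$ still ranks $2_1$ first (its worst bundle containing $2_1$ beats its worst bundle containing $1_1$), so the truthful run is unchanged as well — hence the counterexample works uniformly for optimistic and for pessimistic agents.

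The step I expect to be the main obstacle is arranging this ``chain of dependencies'' so that it actually fires inside a small instance: agent $2$'s round-$1$ report must move agent $1$'s round-$1$ allocation, which forces agent $2$ to be the round-$1$ dictator and requires a genuine conflict between the agents over $D_1$; and agent $1$'s round-$1$ allocation must in turn flip agent $1$'s round-$2$ choice, which is possible only because agent $1$'s preference over $D_2$ is genuinely conditional on $D_1$ — separable preferences would not suffice. One must simultaneously keep both local mechanisms strategyproof, verify that the honest optimistic and pessimistic projections of agent $2$ really are the ones claimed, and check that the manipulating report is a legitimate lexicographic preference inducing the desired projections. Once the profile and the manipulating report are pinned down, what remains is a routine finite check of two runs of $f_O$.
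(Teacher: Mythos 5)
Your proposal is correct and is essentially the paper's own counterexample up to relabeling: two agents, serial dictatorships with picking orders $(2,1)$ for the first type and $(1,2)$ for the second, an $O$-legal agent whose second-type preference is conditional on the first-type item, and a second agent with reversed importance order and separable preferences who benefits by misreporting her first-type ranking, the gain cascading through the other agent's conditional preference. The only additions beyond the paper (the explicit optimistic/pessimistic projection check and the remark on padding to $p>2$ types) are harmless refinements of the same argument.
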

\begin{proof}
    When preferences are lexicographic, and not $O$-legal, a sequential  mechanism composed of locally strategyproof mechanisms is not necessarily strategyproof, when agents are either optimistic or pessimistic, as we show with counterexamples. Consider the profile with two agents and two types $H$ and $C$. Agent $1$'s importance order is $H \rhd C$, preferences over $H$ is $1_H \succ 2_H$ and over $C$ is conditioned on the assignment of house $1_H: 1_C \succ 2_C, 2_H: 2_C \succ 1_C$. Agent $2$ has importance order $C \rhd H$ and separable preferences with order on cars being $2_C \succ 1_C$, and order on houses $1_H \succ 2_H$. Consider the sequential mechanism composed of serial dictatorships where $H \rhd C$ and for houses the picking order over agents is $(2,1)$, and for cars $(1,2)$. When agents are truthful and either optimistic or pessimistic, the allocation is $2_H2_C$ and $1_H1_C$ respectively to agents $1$ and $2$. When agent $2$ misreports her preferences over houses as $2_H \succ 1_H$, and agent $1$ is truthful and either optimistic or pessimistic, the allocation is $1_H1_C$ and $2_H2_C$ to agents $1$ and $2$ respectively, a beneficial misreport for agent $2$.
\end{proof}

In contrast, sequential mechanisms composed of locally strategyproof mechanisms are guaranteed to be strategyproof under two natural restrictions on the domain of lexicographic preferences: (1) when agents' preferences are lexicographic and separable, but not necessarily $O$-legal w.r.t. a common importance order $O$, and (2) when agents' have $O$-legal lexicographic preferences, and the local mechanisms also satisfy non-bossiness.

\begin{restatable}{prop}{proplocalsptosp}\label{prop:localsp2sp}
	For any importance order $O\in\mo$, a sequential mechanism composed of strategyproof local mechanisms is strategyproof,
	\begin{enumerate}[label=(\arabic*),wide,labelindent=0pt]
		\item when agents are either optimistic or pessimistic, and their preferences are separable and lexicographic, or 
		\item when agents' preferences are lexicographic and $O$-legal and the local mechanisms also satisfy non-bossiness.
	\end{enumerate}
\end{restatable}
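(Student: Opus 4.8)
The plan is to analyze $f_O$ one round at a time. Fix $O=[D_1\rhd\dots\rhd D_p]$ and a sequential mechanism $f_O=(f_1,\dots,f_p)$ whose local mechanisms are all strategyproof; let $P$ be a profile in the relevant domain, $j$ an agent, and $\succ'_j$ a report available to $j$ in that domain, and write $A=f_O(P)$ and $A'=f_O(P_{-j},\succ'_j)$. For a round $i$ let $\succ^i_j$ denote $j$'s true preference over $D_i$ as projected by $f_O$ at round $i$ of the truthful run (conditioned on the partial allocation of $D_{<i}$ already produced), and let $\bar\succ^i_j$ denote the corresponding order derived from $\succ'_j$ along the run on $(P_{-j},\succ'_j)$. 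First I would isolate a piece of lexicographic bookkeeping common to both parts: since $\succ_j$ is a legal lexicographic extension of a CP-net, if some round $i$ satisfies $[A(j)]_h=[A'(j)]_h$ for all $h<i$ and $[A(j)]_i\succ^i_j[A'(j)]_i$, then the type-$i$ items differ and the definition of a lexicographic extension yields $A(j)\succ_j A'(j)$; and if $[A(j)]_i=[A'(j)]_i$ for every $i$ then $A(j)=A'(j)$. So it suffices to prove, round by round, that either $j$'s type-$i$ item strictly improves while all earlier items coincide, or it is unchanged, and strategyproofness of $f_i$ will supply this once the inputs fed to $f_i$ by the agents other than $j$ are known to agree in the two runs.

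\textbf{Part (1).} The crucial observation is that separability makes the projection of any agent's preference onto a not-yet-allocated type $D_i$ coincide with that agent's fixed local order over $D_i$, independently of the partial allocation, and that this holds under both the optimistic and the pessimistic reading. Hence the inputs to $f_i$ contributed by the agents $\neq j$ are literally identical in the truthful and in the manipulated run; $j$'s own input is $\succ^i_j$ in the first and some order $\bar\succ^i_j$ in the second; and strategyproofness of $f_i$ gives $[A(j)]_i\succeq^i_j[A'(j)]_i$ for every $i$. Because $\succ_j$ is itself separable lexicographic, this coordinatewise weak improvement forces $A(j)\succeq_j A'(j)$ (inspect the most important type of $j$'s own order at which the two bundles differ). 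I expect this direction to be essentially bookkeeping once separability has been used to decouple the rounds.

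\textbf{Part (2).} Here I would run an induction on the round $i$ carrying the invariant $\projs{A}{D_{<i}}=\projs{A'}{D_{<i}}$, i.e.\ the full allocations of the first $i-1$ types agree in the two runs (the base case $i=1$ is vacuous). Assuming the invariant at round $i$: every agent $h\neq j$ feeds $f_i$ her true preference conditioned on the common partial allocation, so those inputs agree; $j$ feeds $\succ^i_j$ versus $\bar\succ^i_j$, both well-defined linear orders over $D_i$ since the manipulation must itself lie in the $O$-legal domain; and strategyproofness of $f_i$ yields $[A(j)]_i\succeq^i_j[A'(j)]_i$. If this is strict, the lexicographic bookkeeping (applicable since $[A(j)]_h=[A'(j)]_h$ for $h<i$ by the invariant) gives $A(j)\succ_j A'(j)$, contradicting that $\succ'_j$ is a manipulation; so I may assume $[A(j)]_i=[A'(j)]_i$. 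Then the two round-$i$ profiles handed to $f_i$ differ only in $j$'s report and give $j$ the same type-$i$ item, so non-bossiness of $f_i$ forces $\projs{A}{D_i}=\projs{A'}{D_i}$, which re-establishes the invariant at $i+1$; and if the induction exhausts all $p$ rounds then $A(j)=A'(j)$. The main obstacle is exactly this propagation step: strategyproofness of $f_i$ only controls $j$'s own type-$i$ item, and without non-bossiness of $f_i$ the rest of the round-$i$ allocation could differ between the runs, which would destroy the alignment of the other agents' conditioned preferences that the induction at round $i+1$ relies on. Non-bossiness is precisely what upgrades ``$j$'s item is unchanged'' to ``the whole round-$i$ allocation is unchanged,'' and it is the only place where the extra hypothesis of case (2) enters.
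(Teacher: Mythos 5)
Your proposal is correct and follows essentially the same route as the paper: in part (1) separability decouples the rounds so the other agents' round-$i$ inputs are unchanged and local strategyproofness plus lexicographic comparison at $j$'s most important differing type finishes the argument, and in part (2) non-bossiness is used exactly as in the paper to upgrade ``$j$'s round-$i$ item is unchanged'' to ``the entire round-$i$ allocation is unchanged,'' keeping the conditioned preferences aligned for later rounds. The only difference is presentational (a forward induction on rounds versus the paper's choice of the most important type at which $j$'s item changes), so no further comparison is needed.
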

\begin{proof}
	
	(1): Let $P$ be a profile of separable lexicographic preferences. Suppose for the sake of contradiction that an agent $j$ has a beneficial misreport $\succ'_j$, and let $P'=(P_{-j},\succ'_j)$. Let $k$ be the type of highest importance to $j$ for which $[f_O(P')(j)]_k \neq [f_O(P)(j)]_k$. Then, by our assumption that preferences are lexicographic, $k$ being the most important type for $j$ where her allocated item differs, and that $P'$ is a beneficial manipulation, it must hold that $[f_O(P')(j)]_k \succ [f_O(P)(j)]_k$. Since, preferences are separable,  $[f(P')]_k = f_k(\projs{P'}{\{D_k\}})$. Since every other agent is truthful, $\projs{P'}{\{D_k\}} = (\projs{P_{-j}}{\{D_k\}},\proj{\succ'_j}{D_k})$, and $\projs{\succ'_j}{D_k} \neq \projs{\succ_j}{D_k}$ is a beneficial manipulation, which implies that $f_k$ is not strategyproof, a contradiction to our assumption.
	
	(2) Now, we consider the case where the profile of truthful preferences $P$ is an arbitrary $O$-legal and lexicographic profile of preferences that may not be separable, and the local mechanisms are non-bossy and strategyproof. Suppose for the sake of contradiction that an agent $j$ has a beneficial misreport $\succ'_j$, and let $P'=(P_{-j},\succ'_j)$. W.l.o.g. let $O=[1\rhd\dots\rhd p]$. 
	
	Let $k$ be the most important type for which agent $j$ receives a different item. We begin by showing that by our assumption that the local mechanisms are non-bossy, and our assumption of $O$-legal lexicographic preferences, it holds that for every $i < k$ according to $O$, $\projs{f_i(P')}{D_i} = \projs{f_i(P)}{D_i}$. For the sake of contradiction, let $h <k$ be the first type for which some agent $l$ receives a different item, i.e. $[f(P')(l)]_{h} \neq [f(P)(l)]_{h}$, and $\projs{f(P')}{D_{<h}} = \projs{f(P)}{D_{<h}}$. 
	Then, by our assumption of $O$-legal lexicographic preferences, and every other agent reporting truthfully, $\proj{P'}{D_h}{\projs{f(P')}{D_{<h}}} = (\newproj{P}{-j}{D_h}{\projs{f(P')}{D_{<h}}}, \newproj{\succ'}{j}{D_h}{\projs{f(P')}{D_{<h}}})$. 
	By minimality of $k$, we know that $\projs{f_{h}(P')(j)}{D_h} = \projs{f_{h}(P)(j)}{D_h}$. But, $\projs{f_{h}(P')(l)}{D_h} \neq \projs{f_{h}(P)(l)}{D_h}$, which implies that $f_h$ does not satisfy non-bossiness, which is a contradiction.
	
    Now, by minimality of $k$ and our assumption that preferences are $O$-legal and lexicographic and that $k$ is the most important type for which any agents' allocation changes as we just showed, it must hold that $[f(P')(j)]_k = f_{k}(\proj{P'}{D_k}{\projs{f(P')}{D_{<k}}})(j) \succ f_{k}(\proj{P}{D_k}{\projs{f(P)}{D_{<k}}})(j) = [f(P)(j)]_k$. 
    However, $\projs{f(P')}{D_{<k}} = \projs{f(P)}{D_{<k}}$, and $\proj{P'_{-j}}{D_k}{\projs{f(P')}{D_{<k}}} = \proj{P_{-j}}{D_k}{\projs{f(P')}{D_{<k}}}$. This implies that $f_k$ is not strategyproof, which is a contradiction.
\end{proof}

Having established that it is possible to design strategyproof sequential mechanisms, we now turn our attention to strategyproof sequential mechanisms that satisfy other desirable properties such as non-bossiness, neutrality, monotonicity, and Pareto-optimality. In \Cref{thm:crnets}, we show that under $O$-legal preferences, a mechanism satisfies strategyproofness and non-bossiness of more important types if and only if it is an $O$-legal CR-net composed of mechanisms that satisfy the corresponding counterparts of these properties for allocating items of a single type, namely, local strategyproofness and non-bossiness.

\begin{restatable}{dfn}{dfncrnet}\label{dfn:crnet}{\em [CR-net]}
A (directed) conditional rule net (CR-net) $\mm$ over $\md$ is defined by \begin{enumerate}[wide,labelindent=0pt,label = (\roman*)] 
\item a directed graph $G=(\{D_1,...,D_p\},E)$, called the {\em dependency graph}, and 
\item for each $D_i$, there is a conditional rule table $\crt_i$ that contains a mechanism denoted $\proj{\mm}{D_i}{A}$ for $D_i$ for each allocation $A$ of all items of types that are parents of $D_i$ in $G$, denoted $\pa(D_i)$.\end{enumerate}
Let $O=[D_1\rhd\dots\rhd D_p]$, then a CR-net is $O$-legal if there is no edge $(D_i,D_l)$ in its dependency with $i>l$.
\end{restatable}
\begin{figure}
  \centering
  \includegraphics[width=\linewidth]{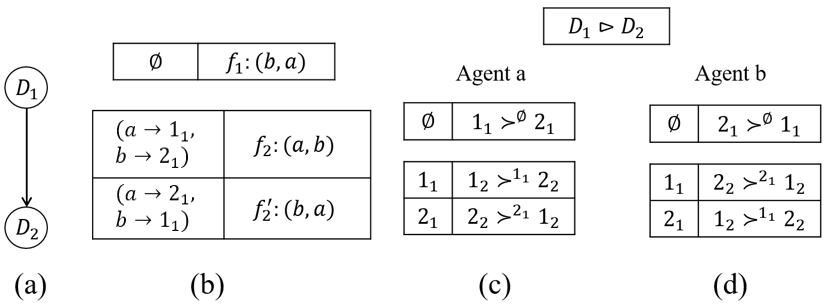}
  \caption{A serial dictatorship CR-net $f$.}
  \label{eg:crnet}
\end{figure}
\begin{example}
We note that the local mechanisms in a CR-net may be any mechanism that can allocate $n$ items to $n$ agents given strict preferences.
In Figure~\ref{eg:crnet}, we show a CR-net $f$ where all the local mechanisms are serial dictatorships.
The directed graph is shown in Figure~\ref{eg:crnet}(a), which implies $D_2$ depends on $D_1$. Figure~\ref{eg:crnet}(b) shows the CRT of $f$.
In the CRT, $f_1:(b,a)$ means that in the serial dictatorship $f_1$, agent $b$ picks her most preferred item first followed by agent $a$, and it is similar for $f_2,f'_2$.
The conditions in the CR-net, which are partial allocations are represented by mappings, for example, $(a\rightarrow 2_1)$ means agent $a$ gets $2_1$.
Figure~\ref{eg:crnet} (c) and (d) are the $O$-legal preferences of agents $a$ and $b$, respectively, where $O=[D_1\rhd D_2]$.
According to $f$, first we apply $f_1$ on $D_1$, and we have $a\rightarrow 1_1,b\rightarrow 2_1$.
Then, by CRT of $f$ we use $f_2$ for $D_2$, and we have $a\rightarrow 1_2,b\rightarrow 2_2$.
Therefore $f$ outputs an allocation where $a\rightarrow (1_1,1_2),b\rightarrow (2_1,2_2)$.
\end{example} 

\begin{restatable}{lem}{lemspsm}\label{lem:spsm}
	When agents' preferences are restricted to the $O$-legal lexicographic preference domain, for any strategyproof mechanism $f$, any profile $P$, and any pair $(P_{-j}, \succ'_j)$ obtained by agent $j$ misreporting her preferences by raising the rank of $f(P)(j)$ such that for any bundle $b$, $f(P)(j) \succ_j b \implies f(P)(j) \succ'_j b$, it holds that $f(P_{-j}, \succ'_j)(j) = f(P)(j)$.
\end{restatable}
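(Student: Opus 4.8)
The plan is to invoke strategyproofness twice---once at the profile $P$ and once at the profile $P'=(P_{-j},\succ'_j)$---and then use the rank-raising hypothesis to force the two outcomes to agree. Set $a=f(P)(j)$ and $a'=f(P')(j)$; both are bundles in $\md$ since $f$ outputs allocations, and both $\succ_j$ and $\succ'_j$ are admissible reports, the latter because by hypothesis $\succ'_j$ is still an $O$-legal lexicographic preference. Since strategyproofness quantifies over all profiles and all within-domain misreports, it applies at $P$ and at $P'$ alike.

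First I would apply strategyproofness at $P$, where agent $j$'s true preference is $\succ_j$ and $\succ'_j$ is a candidate misreport: $j$ cannot strictly gain, so $\lnot(a'\succ_j a)$, i.e. $a'=a$ or $a\succ_j a'$. Next I would apply strategyproofness at $P'$, where agent $j$'s true preference is $\succ'_j$ and reporting $\succ_j$ is the candidate misreport back: again $j$ cannot strictly gain, so $\lnot(a\succ'_j a')$, i.e. $a'=a$ or $a'\succ'_j a$.

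Finally, suppose for contradiction that $a\neq a'$. Then the two conclusions above give simultaneously $a\succ_j a'$ and $a'\succ'_j a$. Instantiating the rank-raising hypothesis $f(P)(j)\succ_j b\implies f(P)(j)\succ'_j b$ with $b=a'$ turns $a\succ_j a'$ into $a\succ'_j a'$, which contradicts $a'\succ'_j a$ since $\succ'_j$ is a strict linear order. Hence $a=a'$, that is, $f(P_{-j},\succ'_j)(j)=f(P)(j)$. I do not expect a genuine obstacle here: the only points requiring care are that $\succ'_j$ lies in the admissible domain (given in the statement) and that strategyproofness is stated so as to be usable at both $P$ and $P'$, which it is; notably the argument uses nothing about the lexicographic structure beyond $\succ_j,\succ'_j$ being strict linear orders on which strategyproofness is assumed to hold.
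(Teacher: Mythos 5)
Your proposal is correct and is essentially the paper's own argument: the paper likewise applies strategyproofness both at $P$ (using the rank-raising hypothesis to convert a $\succ'_j$-improvement into a $\succ_j$-improvement) and at $P'=(P_{-j},\succ'_j)$ (where reporting $\succ_j$ would be a beneficial misreport), just organized as a case analysis on how $\succ'_j$ ranks the two outcomes rather than as your single contradiction from the conjunction. No substantive difference.
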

\begin{proof}
	Suppose for the sake of contradiction that $f$ is a strategyproof mechanism that does not satisfy monotonicity. Let $P=(\succ_j)_{j\le n}$ be a profile, $j$ be an agent who misreports her preferences as $\succ'_j$ obtained from $\succ_j$ by raising the rank of $f(P)(j)$, specifically, for any bundle $b$, $f(P)(j) \succ_j b \implies f(P)(j) \succ'_j b$. Then, either: (1) $f(P_{-j},\succ'_j)(j) \succ'_j f(P)(j)$, or (2) $f(P)(j) \succ'_j f(P_{-j},\succ'_j)(j)$.
	
	(1) Suppose $f(P_{-j},\succ'_j)(j) \succ'_j f(P)(j)$. First, we claim that if $f(P_{-j},\succ'_j)(j)\allowbreak\succ'_j f(P)(j)$, then $f(P_{-j},\succ'_j)(j) \succ_j f(P)(j)$. Suppose for the sake of contradiction that this were not true, then $f(P)(j) \succ_j f(P_{-j},\succ'_j)(j)$ and $f(P_{-j},\succ'_j)(j) \succ'_j f(P)(j)$. This is a contradiction to our assumption on $\succ'_j$. This implies that $f(P_{-j},\succ'_j)(j) \succ_j f(P)(j)$ and $\succ'_j$ is a beneficial misreport for agent $j$, a contradiction to our assumption that $f$ is strategyproof.
	
	(2) If $f(P)(j) \succ'_j f(P_{-j},\succ'_j)(j)$, then $\succ_j$ is a beneficial misreport for agent $j$ w.r.t. $P'$, a contradiction to our assumption that $f$ is strategyproof.
\end{proof}

\begin{restatable}{thm}{thmcrnets}\label{thm:crnets}
	For any importance order $O$, a mechanism satisfies strategyproofness and non-bossiness of more important types under the $O$-legal lexicographic preference domain if and only if it is an $O$-legal locally strategyproof and non-bossy CR-net.
\end{restatable}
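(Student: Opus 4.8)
\medskip
\noindent\textbf{Proof plan.} This is an equivalence, so I would prove the two implications separately, writing $O=[D_1\rhd\dots\rhd D_p]$ throughout.

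\smallskip
\noindent\emph{From a CR-net to the two properties.} Suppose $f$ is an $O$-legal CR-net $\mm$ all of whose local mechanisms $\proj{\mm}{D_i}{A}$ are strategyproof and non-bossy. I would re-run the argument of \Cref{prop:localsp2sp}(2), and of the non-bossiness part of \Cref{thm:localtoglobal}, in this conditional setting. For strategyproofness: given a manipulation $\succ'_j$ of agent $j$, let $k$ be the most important type at which $j$'s assigned item changes, and show by induction on $h<k$ that rounds $1,\dots,k-1$ produce the same partial allocation on $P$ and on $(P_{-j},\succ'_j)$, because round $h$ feeds the common allocation of $D_{<h}$ obtained so far into the \emph{same} local mechanism $\proj{\mm}{D_h}{\cdot}$, its two inputs differ only in coordinate $j$, and $j$'s round-$h$ item is unchanged by minimality of $k$, so non-bossiness of that local mechanism forces the entire round-$h$ allocation to coincide; then in round $k$, $O$-legality of the preferences forces the item $j$ now receives to be strictly better under $\proj{\succ_j}{D_k}{\cdot}$, contradicting strategyproofness of $\proj{\mm}{D_k}{\cdot}$. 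For non-bossiness of more important types: if $j$'s misreport leaves her conditional preferences over every type $h<i$ unchanged, then rounds $1,\dots,i-1$ are \emph{literally} unchanged, and in round $i$ the claim reduces to plain non-bossiness of the local mechanism used there.

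\smallskip
\noindent\emph{From the two properties to a CR-net.} Suppose $f$ is strategyproof and non-bossy of more important types on the $O$-legal lexicographic domain; I would build the decomposition by induction on the number of types $p$. The base $p=1$ is immediate, since for a single type ``non-bossiness of more important types'' is ordinary non-bossiness and $\mm:=f$ works. For the inductive step, first isolate the top rule: for a profile $Q$ over $D_1$ put $f_1(Q):=\projs{f(\hat Q)}{D_1}$, where $\hat Q$ is the separable $O$-legal lexicographic profile extending $Q$ by a fixed local order on each later type. The preliminary fact to establish, call it $(\star_1)$, is that $\projs{f(P)}{D_1}$ depends on an $O$-legal lexicographic profile $P$ only through $\projs{P}{D_1}$: its proof changes one agent at a time, first using both directions of strategyproofness together with the lexicographic structure (when the deviating agent's two bundles differ in type~$1$ the comparison is decided there, under the common type-$1$ order, forcing the bundles to agree in type~$1$) to show the deviating agent's type-$1$ item is unchanged, and then invoking non-bossiness of more important types at $i=1$, whose hypothesis on the misreport is vacuous, to propagate this to everyone. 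From $(\star_1)$ one gets $\projs{f(P)}{D_1}=f_1(\projs{P}{D_1})$ for all $P$, and the same bootstrap, together with \Cref{lem:spsm}, shows $f_1$ is strategyproof and non-bossy. Next, for each allocation $A_1$ in the range of $f_1$, define a mechanism $g^{A_1}$ for the $(p-1)$-type problem on $\{D_2,\dots,D_p\}$ with importance order $O'=[D_2\rhd\dots\rhd D_p]$: extend a given $O'$-legal lexicographic profile $\bar P$ over $\md_{D_{\ge 2}}$ to an $O$-legal profile $E(\bar P)$ whose type-$1$ part is a fixed profile realizing $A_1$ and whose conditional preferences over the later types carry $\bar P$ but ignore the type-$1$ coordinate (which keeps the profile $O$-legal), and put $g^{A_1}(\bar P):=\projs{f(E(\bar P))}{D_{\ge 2}}$. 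Lifting manipulations and bossy misreports through $E$, and using $\projs{f(E(\bar P))}{D_1}=A_1$, shows $g^{A_1}$ inherits strategyproofness and non-bossiness of more important types, so the inductive hypothesis yields an $O'$-legal locally strategyproof and non-bossy CR-net computing it. Gluing these CR-nets below $f_1$, with $D_1$ made a parent of every later type, produces the candidate CR-net for $f$; it equals $f$ provided that, for every $P$ with $A_1:=\projs{f(P)}{D_1}$, one has $\projs{f(P)}{D_{\ge 2}}=g^{A_1}\bigl(\proj{P}{D_{\ge 2}}{A_1}\bigr)$.

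\smallskip
\noindent\emph{The main obstacle.} Everything above except that last proviso is bootstrapping; the proviso is the heart of the matter and asserts that the restriction of $f$ to the less important types depends on the preferences over $D_1$ only through the allocation $A_1$ they induce. Strategyproofness handles the deviating agent cleanly: if an agent $j$ changes a conditional preference only at a conditioning she is not actually assigned, then $\projs{f}{D_{<h}}$ is untouched, both of $j$'s bundles therefore share her realized items of the more important types, the relevant conditional preferences for the remaining types are unchanged, and so strategyproofness read in both directions pins $j$'s whole bundle; in particular $j$ cannot be pinned and yet have her type-$1$ preference matter below $D_1$ once $f_1$'s output is fixed. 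Propagating this to the remaining agents is the delicate point, because non-bossiness of more important types bounds only the effect of \emph{less} important reports on \emph{more} important allocations and gives no direct leverage on a change in the type-$1$ preferences. My plan is to close the gap by normalizing the agents' more important preferences one agent and one adjacent transposition at a time, from the top item downward, using monotonicity (\Cref{lem:spsm}) to keep the deviating agent's bundle fixed and the $i=1$ instance of non-bossiness of more important types to keep the more important allocations fixed at each stage, and then iterating this normalization through the recursion on the types. I expect this bookkeeping, rather than any single inequality, to be the crux of the proof.
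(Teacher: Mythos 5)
Your overall architecture coincides with the paper's: the ``if'' direction via the argument of \Cref{prop:localsp2sp}, and for ``only if'' an induction on the number of types in which one first shows that the $D_1$-allocation depends only on $\projs{P}{D_1}$ by switching one agent at a time, using strategyproofness under lexicographic comparisons and the $i=1$ instance of non-bossiness of more important types (your $(\star_1)$ is the paper's Claim~\ref{cl:pref1}), then shows $f_1$ inherits strategyproofness and non-bossiness by lifting to separable extensions (the paper's Step~2), and finally shows the conditional mechanisms inherit both properties so the induction hypothesis applies (the paper's Step~4, your lifting through $E$). You also correctly single out \Cref{lem:spsm} as the tool for the remaining step.

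The genuine gap is exactly the ``proviso'' you flag and do not prove: that $\projs{f(P)}{D_{\ge 2}}$ depends on the $D_1$-preferences only through $[A]_1=\projs{f(P)}{D_1}$. Your plan for it does not suffice as stated: transposition-by-transposition raising plus \Cref{lem:spsm} pins only the deviating agent's own bundle, and the $i=1$ instance of non-bossiness of more important types then pins only the $D_1$-allocation; neither controls the $D_{\ge 2}$-items of the non-deviating agents, which is precisely what the claim asserts --- you admit this (``no direct leverage''), and ``iterating through the recursion on the types'' is circular here, since the conditional mechanism you would recurse on is only well defined once this claim holds. The paper closes the step with Claim~\ref{cl:dep1}: given $P_1,P_2$ with the same $[A]_1$ and the same conditional profile $\proj{P}{D_{-1}}{[A]_1}$, it constructs a \emph{single common} profile $P$ in which every agent $j$ has $[A]_1(j)$ raised to the top of her $D_1$-preference while keeping $\newproj{\succ}{j}{D_{-1}}{[A]_1}$ fixed; since lexicographically this move only enlarges the lower contour set of the assigned bundle (all bundles missing $[A]_1(j)$ drop below it, and the order among bundles containing $[A]_1(j)$ is untouched), \Cref{lem:spsm} is applied agent by agent as a whole-preference raise --- no adjacent-transposition bookkeeping --- to transport the allocation from both $P_1$ and $P_2$ to $P$, giving $f(P_1)=f(P)=f(P_2)$ and hence the decomposition $f(P)=(f_1(\projs{P}{D_1}),\proj{f}{D_{-1}}{[A]_1}(\proj{P}{D_{-1}}{[A]_1}))$. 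So you have located the crux and the right lemma, but the decisive construction (one common normalized profile reachable from both sides by raises of each agent's assigned bundle) is missing, and without it your inductive step is not established.
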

\begin{proof}
The if part is obvious (and is proved in Proposition~\ref{prop:localsp2sp}). We prove the only if part by induction. 

\begin{restatable}{claim}{cldecompose}\label{cl:decompose}
If an allocation mechanism satisfies non-bossiness of more important types and strategyproofness, then it can be decomposed into a locally strategyproof and non-bossy CR-net.
\end{restatable}
Proof by induction on the number of types. The claim is trivially true for the base case with $p=1$ type. Suppose the claim holds true for $p=k$ types i.e. when there are at most $k$ types, if an allocation mechanism is non-bossy in more important types and strategyproof, then it can be decomposed into locally strategyproof and non-bossy mechanisms.

When $p=k+1$, we prove that any non-bossy and strategyproof allocation mechanism $f$ for a basic type-wise allocation problem can be decomposed into two parts by Step 1:
\begin{enumerate}
\item Applying a local allocation mechanism $f_1$ to $D_1$ to compute allocation $[A]_1$.
\item Applying an allocation mechanism $\proj{f}{D_{-1}}{[A]_1}$ to types $D_{-1}$.
\end{enumerate}

\noindent $\bullet${\bf~Step 1.} For any strategyproof allocation mechanism satisfying non-bossiness of more important types, allocations for type $1$ depend only on preferences restricted to $D_1$.

\begin{restatable}{claim}{clpref1}\label{cl:pref1}
For any pair of profiles $P = (\succ_j)_{j\le n},Q = (\succ'_j)_{j\le n}$, and $\projs{P}{D_1} = \projs{Q}{D_1}$, we must have that $\projs{f(P)}{D_1} = \projs{f(Q)}{D_1}$.
\end{restatable}
\begin{proof}
Suppose for sake of contradiction that $\projs{f(P)}{D_1} \neq \projs{f(Q)}{D_1}$. For any $0 \le j \le n$, define $P_j = (\succ'_1,\dots,\succ'_j,\succ_{j+1},\dots,\succ_n)$ and suppose $\projs{f(P_j)}{D_1} \neq \projs{f(P_{j+1})}{D_1}$ for some $j \le n-1$. Let $[A]_1 = \projs{f(P_j)(j+1)}{D_1}$ and $[B]_1 = \projs{f(P_{j+1})(j+1)}{D_1}$. Now, suppose that 

Case 1: $[A]_1 = [B]_1$, but for some other agent $\hat j$, $\projs{f(P_j)(\hat j)}{D_1} \neq \projs{f(P_{j+1})(\hat j)}{D_1}$. This is a direct violation of non-bossiness of more important types because $\newprojs{P}{j}{D_1} = \newprojs{P}{j+1}{D_1}$ by construction.

Case 2: $[A]_1 \neq [B]_1$. If $[B]_1 \newprojs{\succ}{j+1}{D_1} [A]_1$, then $(P_j,\succ'_{j+1})$ is a beneficial manipulation due to agents' lexicographic preferences. Otherwise, if $[A]_1 \newprojs{\succ}{j+1}{D_1} [B]_1$, then $(P_{j+1},\succ_{j+1})$ is a beneficial manipulation due to our assumption that $\newprojs{\succ}{j+1}{D_1} = \newprojs{\succ'}{j+1}{D_1}$ and agents' lexicographic preferences. This contradicts the strategyproofness of $f$.
\end{proof}

\noindent $\bullet${\bf~Step 2.} Show that $f_1$ satisfies strategyproofness and non-bossiness.

First, we show that $f_1$ must satisfy strategyproofness by contradiction. Suppose for the sake of contradiction that $f$ is strategyproof but $f_1$ is not strategyproof. Let $P=(\succ_j)_{j \le n}$ be a profile of agents' preferences over $D_1$. Then, there exists an agent $j^*$ with a beneficial manipulation $\succ'_{j^*}$. Now, consider a profile $Q = (\bar\succ_j)_{j \le n}$ where for every agent $j, \newprojs{\bar\succ}{j}{D_1} = \succ_j$ and the mechanism $f$ whose local mechanism for $D_1$ is $f_1$. We know from Step 1 that $\projs{f(Q)}{D_1} = f_1(\projs{Q}{D_1}) = f_1(P)$. However, in that case, because agents' preferences are lexicographic with $D_1$ being the most important type, agent $j^*$ has a successful manipulation $\bar\succ'_{j^*}$ where $\newprojs{\bar\succ'}{j^*}{D_1} = \succ'_{j^*}$ since the resulting allocation of $f_1(\bar\succ_{-j^*},\bar\succ'_{j^*})$ is a strictly preferred item of type $D_1$. This is a contradiction to our assumption on the strategyproofness of $f$.

Then, we also show that $f_1$ satisfies non-bossiness. 
Suppose for the sake of contradiction that $f_1$ is not non-bossy. Let $P=(\succ_j)_{j \le n}$ be a profile of agents' preferences over $D_1$. Then, there exists an agent $j^*$ with a bossy preference $\succ'_{j^*}$ such that for $P'=(\succ_{-j^*},\succ'_{j^*})$, $f_1(P)(j^*)=f_1(P')(j^*)$ while $f_1(P)(j)\neq f_1(P')(j)$ for some $j$. Now, consider a profile $Q = (\bar\succ_j)_{j \le n}$ where for every agent $j, \newprojs{\bar\succ}{j}{D_1} = \succ_j$ and the mechanism $f$ whose local mechanism for $D_1$ is $f_1$. We know from Step 1 that $\projs{f(Q)}{D_1} = f_1(\projs{Q}{D_1}) = f_1(P)$. However, in that case, because agents' preferences are lexicographic with $D_1$ being the most important type, agent $j^*$ has a bossy preference $\bar\succ'_{j^*}$ where $\newprojs{\bar\succ'}{j^*}{D_1} = \succ'_{j^*}$ such that $\projs{f(Q)(j^*)}{D_1}=\projs{f(\bar\succ_{-j^*},\bar\succ'_{j^*})(j^*)}{D_1}$ while $\projs{f(Q)(j)}{D_1}\neq \projs{f(\bar\succ_{-j^*},\bar\succ'_{j^*})(j)}{D_1}$ for some $j$. This is a contradiction to our assumption that $f$ satisfies non-bossiness of more important types.

\noindent $\bullet${\bf~Step 3.} The allocations for the remaining types only depend on the allocations for $D_1$.
\begin{restatable}{claim}{cldep1}\label{cl:dep1}
Consider any pair of profiles $P_1,P_2$ such that $[A]_1 = f_1(\newprojs{P}{1}{D_1}) = f_1(\newprojs{P}{2}{D_1})$, and $\newproj{P}{1}{D_{-1}}{[A]_1} = \newproj{P}{2}{D_{-1}}{[A]_1}$, then $f(P_1) = f(P_2)$.
\end{restatable}
\begin{proof}
We prove the claim by constructing a profile $P$ such that $f(P) = f(P_1) = f(P_2)$.

Let $P_1 = (\succ_j)_{j \le n}$, $P_2 = (\bar\succ_j)_{j\le n}$ and $P=(\hat\succ_j)_{j\le n}$. Let $\hat\succ_j$ be obtained from $\succ_j$ by changing the preferences over $D_1$ by raising $[A]_1(j)$ to the top position. Agents' preference over $D_{-1}$ are $\newproj{\hat\succ}{j}{D_{-1}}{[A]_1} = \newproj{\succ}{j}{D_{-1}}{[A]_1} ( = \newproj{\bar\succ}{j}{D_{-1}}{[A]_1})$. It is easy to check that for every bundle $b$, $f(P)(j) \succ_j b \implies f(P)(j) \hat\succ_j b$.
By applying Lemma~\ref{lem:spsm} sequentially to every agent, $f(P) = f(P_1)$.
Similarly, $f(P) = f(P_2)$. It follows that for any allocation $[A]_1$ of items of type $D_1$, there exists a mechanism $\proj{f}{D_{-1}}{[A]_1}$ such that for any profile $P$, we can write $f(P)$ as $(f_1(\projs{P}{D_1}), \proj{f}{D_{-1}}{[A]_1}(\proj{P}{D_{-1}}{[A]_1}))$.
\end{proof}
\noindent $\bullet${\bf~Step 4.} Show that $\proj{f}{D_{-1}}{[A]_1}$ satisfies strategyproofness and non-bossiness of important types for any allocation $[A]_1$ of $D_1$.

Suppose for the sake of contradiction that $\proj{f}{D_{-1}}{[A]_1}$ is not strategyproof for some profile $\proj{P}{D_{-1}}{[A]_1}$. Then, for $P=(\succ_j)_{j \le n}$ there is an agent $j^*$ with a beneficial manipulation w.r.t. $P$ and $[A]_1$, $\newproj{\succ'}{j^*}{D_{-1}}{[A]_1} \neq \newproj{\succ}{j^*}{D_{-1}}{[A]_1}$ and $\newprojs{\succ'}{j^*}{D_1} = \newprojs{\succ}{j^*}{D_1}$. Let $Q = (\succ_{-j^*},\succ'_{j^*})$. Then, $f(Q)(j) = ([A]_1,\proj{f}{D_{-1}}{[A]_1}(\proj{Q}{D_{-1}}{[A]_1}))(j) \succ_j ([A]_1,\proj{f}{D_{-1}}{[A]_1}(\proj{P}{D_{-1}}{[A]_1}))(j) = f(P)(j)$. This is a contradiction to the strategyproofness of $f$.

Suppose for sake of contradiction that $\proj{f}{D_{-1}}{[A]_1}$ does not satisfy non-bossiness of important types. Then, there is a profile $P = (\succ_j)_{j \le n}$, and an agent $j^*$ with a bossy manipulation of her preferences $\newproj{\succ}{j^*}{D_{-1}}{[A]_1}$. Then, it is easy to verify that $f$ also does not satisfy non-bossiness of important types.

In Step 1, we showed that the allocation for $D_1$ only depends on the restriction of agents' preferences to $D_1$ i.e. over $\projs{P}{D_1}$. In Step 3 we showed that $f(P)$ can be decomposed as $(f_1(\projs{P}{D_1}),\proj{f}{D_{-1}}{[A]_1}(\proj{P}{D_{-1}}{[A]_1}))$ where $[A]_1 = f_1(\projs{P}{D_1})$. In Steps 2 we showed that $f_1$ must be strategyproof and non-bossy. In Step 4, we showed that for any output $[A]_1$ of $f_1$, the mechanism $\proj{f}{D_{-1}}{[A]_1}$ satisfies both strategyproofness and non-bossiness of important types i.e. that we can apply the induction assumption that $\proj{f}{D_{-1}}{[A]_1}$ is a locally strategyproof and non-bossy CR-net of allocation mechanisms. Together with the statement of Step 2, this completes the inductive argument. 
\end{proof}

In \Cref{thm:char}, we characterize the class of strategyproof, non-bossy of more important types, and type-wise neutral mechanisms under  $O$-legal lexicographic preferences, as the class of $O$-legal sequential compositions of serial dictatorships. The proof relies on \Cref{thm:crnets} and \Cref{cl:nCR}, where we show that any CR-net mechanism that satisfies type-wise neutrality is an $O$-legal sequential composition of neutral mechanisms, one for each type.

\begin{restatable}{claim}{neutralCR-net}\label{cl:nCR}
For any importance order $O$, an $O$-legal CR-net with type-wise neutrality is an $O$-legal sequential composition of neutral mechanisms.
\end{restatable}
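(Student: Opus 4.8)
The plan is to induct on the number of types $p$, always processing types in the importance order $O=[D_1\rhd\dots\rhd D_p]$. The base case $p=1$ is immediate, since an $O$-legal CR-net over a single type is just one mechanism and type-wise neutrality of the induced mechanism is exactly neutrality of that mechanism. For $p\ge 2$, let $\mm$ be an $O$-legal CR-net over $D_1,\dots,D_p$ whose induced mechanism $f$ is type-wise neutral. As $\mm$ is $O$-legal, $D_1$ has no parents, so $\crt_1$ is a single mechanism $f_1$, and unrolling the definition of the mechanism induced by a CR-net gives, for every profile $P$,
$f(P)=\big(f_1(\projs{P}{D_1}),\ \proj{f}{D_{-1}}{[A]_1}(\proj{P}{D_{-1}}{[A]_1})\big)$ with $[A]_1=f_1(\projs{P}{D_1})$, where $\proj{f}{D_{-1}}{[A]_1}$ is the mechanism over $D_2,\dots,D_p$ induced by the $[D_2\rhd\dots\rhd D_p]$-legal CR-net obtained from $\mm$ by fixing the $D_1$-coordinate of every condition to $[A]_1$. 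I would first show $f_1$ is neutral: applying type-wise neutrality of $f$ to a permutation $\Pi=(\Pi_1,\mathrm{id},\dots,\mathrm{id})$ that only renames type-$1$ items and comparing the $D_1$-coordinate of $f(\Pi(P))$, namely $f_1(\Pi_1(\projs{P}{D_1}))$, with that of $\Pi(f(P))$, namely $\Pi_1(f_1(\projs{P}{D_1}))$, yields $f_1(\Pi_1(R))=\Pi_1(f_1(R))$ for every $D_1$-profile $R$.

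Next I would show the conditioning on $D_1$ is vacuous on the range of $f_1$. The two facts I would use are that item renamings act transitively on the allocations of a single type (given allocations $[A]_1,[B]_1$ of $D_1$, the permutation $\Pi_1=[B]_1\circ[A]_1^{-1}$ satisfies $\Pi_1\circ[A]_1=[B]_1$), and that renaming commutes with projection onto the less important types when the conditioning items are held fixed, so that for $\Pi=(\Pi_1,\mathrm{id},\dots,\mathrm{id})$ one has $\proj{\Pi(P)}{D_{-1}}{[B]_1}=\proj{P}{D_{-1}}{[A]_1}$. Fix $[A]_1$ in the range of $f_1$, pick any $D_1$-profile with $f_1$-output $[A]_1$, and let $P$ have that $D_1$-projection together with any prescribed $\proj{P}{D_{-1}}{[A]_1}$; then $\proj{P}{D_{-1}}{[A]_1}$ ranges over all profiles over $D_2,\dots,D_p$, and by neutrality of $f_1$, $f_1(\projs{\Pi(P)}{D_1})=\Pi_1([A]_1)=[B]_1$. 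Plugging the displayed decomposition into $f(\Pi(P))=\Pi(f(P))$ then forces $\proj{f}{D_{-1}}{[B]_1}(Q)=\proj{f}{D_{-1}}{[A]_1}(Q)$ for all $Q$, so all these conditional mechanisms coincide with a single mechanism $g$, induced by an $[D_2\rhd\dots\rhd D_p]$-legal CR-net; since $f(P)$ only ever invokes $\proj{f}{D_{-1}}{[A]_1}$ for $[A]_1$ in the range of $f_1$, we obtain $f(P)=\big(f_1(\projs{P}{D_1}),g(\proj{P}{D_{-1}}{[A]_1})\big)$.

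Finally I would check $g$ is type-wise neutral by applying type-wise neutrality of $f$ to permutations $\Pi=(\mathrm{id},\Pi_2,\dots,\Pi_p)$ that fix type-$1$ items: the $D_1$-coordinate of the output is unchanged, and the remaining coordinates, again using that renaming commutes with projection given fixed conditioning items, give $g(\Pi'(Q))=\Pi'(g(Q))$ for $\Pi'=(\Pi_2,\dots,\Pi_p)$. By the induction hypothesis $g$ is an $[D_2\rhd\dots\rhd D_p]$-legal sequential composition of neutral mechanisms $(g_2,\dots,g_p)$, and hence $f$ is the $O$-legal sequential composition $(f_1,g_2,\dots,g_p)$ of neutral mechanisms. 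I expect the crux to be the middle step — deducing from type-wise neutrality that the $D_1$-conditioning of the CR-net collapses — which rests on the bookkeeping that item renamings preserve $O$-legality, act transitively on the allocations of a single type (and hence on the range of $f_1$), and commute with the projection of $O$-legal lexicographic preferences onto the less important types whenever the conditioning items are held fixed.
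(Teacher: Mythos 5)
Your proposal is correct and follows essentially the same route as the paper's proof: both hinge on the two permutation arguments (a renaming of the current type's items with identities elsewhere to get neutrality of the local mechanism, and a renaming of more important types' items with identity on the current type, together with transitivity of renamings on single-type allocations and the fact that projection of the conditional preferences is unchanged, to collapse the CR-net's conditioning to a single mechanism per type). The only difference is organizational — you induct on the number of types by peeling off $D_1$ and invoking the hypothesis on the residual CR-net, whereas the paper steps forward through the types $i=1,2,\dots$ inside the same CR-net — which is immaterial.
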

\begin{proof}
We prove the claim by induction. Suppose $f$ is such a CR-net.
From the decomposition in the proof of Claim~\ref{cl:decompose}, we observe that the mechanism used for type $i$ depends on \projs{f(P)}{D_{\le i}}.
From this observation, and the importance order $O$, we can deduce that the mechanism for type $1$ depends on no other type, and therefore there is only one mechanism for type $1$, say, $f_1$. First we show that $f_1$ is neutral.
Otherwise, there exists a permutation $\Pi_1$ over $D_1$, $f_1(\Pi_1(\projs{P}{D_1}))\neq \Pi_1(f_1(\projs{P}{D_1}))$.
Let $I=(I_i)_{i\le p}$ where $I_i$ is the identity permutation for type $i$.
Then for $\Pi=(\Pi_1,I_{-1})$, we have $\projs{f(\Pi(P))}{D_1}=f_1(\Pi_1(\projs{P}{D_1}))\neq \Pi_1(f_1(\projs{P}{D_1}))=\projs{\Pi(f(P))}{D_1}$, a contradiction.

Now, suppose that for a given $i$, there is only one mechanism $f_{i'}$ for each type $i'\le i$, and each $f_{i'}$ is neutral.
Let $\Pi=(\Pi_{\le i},I_{> i})$ and we have $\projs{f(\Pi(P))}{D_{\le i}}=\projs{\Pi(f(P))}{D_{\le i}}$.
Let $A=\projs{f(P)}{D_{\le i}}$ and $B=\projs{f(\Pi(P))}{D_{\le i}}=\Pi_{\le i}(A)$.
Because $P$ is chosen arbitrarily, $A$ and $B$ are also arbitrary outputs of mechanism $f$ over $D_{\le i}$.
Let $f_{i+1}=\proj{f}{D_{i+1}}{A}$,and $f'_{i+1}=\proj{f}{D_{i+1}}{B}$.
Similarly both $f_{i+1}$ and $f'_{i+1}$ are arbitrary mechanisms in $CRT$.
Because $f$ is neutral, we have $\projs{f(\Pi(P))}{D_{i+1}}=\projs{\Pi(f(P))}{D_{i+1}}$, i.e. $f_{i+1}(\proj{P}{D_{i+1}}{A})= f'_{i+1}(\proj{M(P)}{D_{i+1}}{B})$.
By assumption we know that $\Pi_{i+1}=I_{i+1}$, so $\proj{P}{D_{i+1}}{A}=\proj{\Pi(P)}{D_{i+1}}{B}$.
That means $f_{i+1}$ and $f'_{i+1}$ can replace each other in $CRT$ of $f$ for type $i+1$.
Therefore in fact there is only one mechanism $f_{i+1}$ for type $i+1$ in $CRT$.

Moreover $f_{i+1}$ must be neutral.
Otherwise, there must be some permutation $\Pi_{i+1}$ over $D_{i+1}$, $f_{i+1}(\Pi_{i+1}(\proj{P}{D_{i+1}}{A}))\neq \Pi_{i+1}(f_{i+1}(\proj{P}{D_{i+1}}{A}))$.
Then for $\Pi=(\Pi_{\le i+1},I_{> i+1})$, we have $\projs{f(\Pi(P))}{D_{i+1}}=f_{i+1}(\proj{\Pi(P)}{D_{i+1}}{B})=f_{i+1}(\Pi_{i+1}(\proj{P}{D_{i+1}}{A}))\neq \Pi_{i+1}(f_{i+1}(\proj{P}{D_{i+1}}{A}))=\projs{\Pi(f(P))}{D_{i+1}}$, a contradiction.
\end{proof}

\begin{restatable}{thm}{characterize}\label{thm:char}
For any importance order $O$, under the $O$-legal lexicographic preference domain, an allocation mechanism satisfies strategyproofness, non-bossiness of more important types, and type-wise neutrality if and only if it is an $O$-legal sequential composition of serial dictatorships.
\end{restatable}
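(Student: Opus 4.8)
The plan is to obtain both directions almost entirely from the machinery already established, the one external ingredient being Svensson's classical characterization~\cite{Svensson99:Strategy-proof}: for the single-type problem of assigning $n$ indivisible items among $n$ agents over the full domain of strict linear orders, the mechanisms that are strategyproof, non-bossy, and neutral are exactly the serial dictatorships.

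For the ``if'' direction I would argue as follows. Each serial dictatorship, regarded as a single-type mechanism, is strategyproof, non-bossy, and neutral. An $O$-legal sequential composition of serial dictatorships is in particular an $O$-legal CR-net (one with empty dependency graph) all of whose conditional rules are strategyproof and non-bossy, so the ``if'' part of \Cref{thm:crnets} yields strategyproofness and non-bossiness of more important types; and since it is a sequential mechanism whose local mechanisms are all type-wise neutral, \Cref{thm:localtoglobal} applied with $X$ equal to type-wise neutrality yields type-wise neutrality. This handles one direction.

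For the ``only if'' direction, suppose $f$ is strategyproof, non-bossy of more important types, and type-wise neutral on the $O$-legal lexicographic domain. First I would apply \Cref{thm:crnets} to conclude that $f$ is an $O$-legal CR-net each of whose conditional rules is strategyproof and non-bossy. Then, using that $f$ is also type-wise neutral, I would invoke \Cref{cl:nCR} to strengthen this: $f$ is an $O$-legal \emph{sequential} composition $(f_1,\dots,f_p)$ in which every $f_i$ is neutral, and (being one of the conditional rules of the CR-net above) every $f_i$ is also strategyproof and non-bossy. Thus each $f_i$ is a strategyproof, non-bossy, neutral single-type mechanism, and Svensson's theorem forces each $f_i$ to be a serial dictatorship; hence $f$ is an $O$-legal sequential composition of serial dictatorships.

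The step I expect to need the most care --- the main obstacle --- is confirming that each $f_i$ acts on the \emph{full} single-type domain of strict linear orders over $D_i$, which is exactly what Svensson's theorem requires. This is true because, for any allocation $A$ of the types more important than $D_i$, the projections $\proj{P}{D_i}{A}$ of $O$-legal lexicographic profiles $P$ range over all profiles of linear orders over $D_i$ (the conditional preference tables of the underlying CP-nets being unconstrained), and one has to check that the two reductions --- through \Cref{thm:crnets} and through \Cref{cl:nCR} --- preserve strategyproofness and non-bossiness of the single-type rules over precisely this domain rather than some proper subdomain. Once that is in place, the rest is bookkeeping.
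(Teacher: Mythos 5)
Your proposal is correct and follows essentially the same route as the paper's proof: the ``if'' direction via \Cref{thm:crnets} (viewing the sequential composition as a CR-net with empty dependency graph) together with \Cref{thm:localtoglobal} for type-wise neutrality, and the ``only if'' direction via \Cref{thm:crnets}, then \Cref{cl:nCR}, then the single-type characterization of serial dictatorships. The only cosmetic difference is that you invoke Svensson's classical single-type result where the paper cites \citet{Mackin2016:Allocating} for the $p=1$ case, and your remark that the projected profiles $\proj{P}{D_i}{A}$ range over the full domain of linear orders on $D_i$ is a correct (and implicitly assumed) point.
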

\begin{proof}
Let $O=[D_1\rhd D_2\rhd\dots\rhd D_p]$.
When $p=1$, we know that serial dictatorship is characterized by strategyproofness, non-bossiness, and neutrality~\cite{Mackin2016:Allocating}.
Let $P=(\succ_j)_{j\le n}$ be an arbitrary $O$-legal lexicographic preference profile.

\noindent{\bf$\Rightarrow$}: Let $f_O=(f_1,\dots,f_p)$. It follows from~\Cref{thm:crnets} that if each $f_i$ satisfies strategyproofness and non-bossiness, then $f_O$ satisfies strategyproofness and non-bossiness of more important types, because $f_O$ can be regarded as a CR-net with no dependency among types.
If each $f_i$ satisfies neutrality, then by~\Cref{thm:localtoglobal} we have that $f$ satisfies type-wise neutrality.
Therefore, since each $f_i$ is a serial dictatorship, which implies that it satisfies strategyproofness, non-bossiness, and neutrality, we have that $f_O$ satisfies strategyproofness, non-bossiness of more important types, and type-wise neutrality.

\noindent{\bf$\Leftarrow$}:
We now prove the converse. Let $f$ be a strategyproof and non-bossy mechanism under $O$-legal lexicographic preferences. Then by~\Cref{thm:crnets}, we have that $f$ is an $O$-legal strategyproof and non-bossy CR-net. The rest of the proof depends on the following claim:

\Cref{cl:nCR} implies that there is only one mechanism $f_i$ for each type $i$ in $CRT$, and $f_i$ is neutral.
Therefore with Theorem~\ref{thm:crnets} and Claim~\ref{cl:nCR}, if $f$ satisfies strategyproofness, non-bossiness of more important types, and type-wise neutrality, we have that $f$ is an $O$-legal sequential composition of local mechanisms that are strategyproof, non-bossy, and neutral, which implies that they are serial dictatorships~\cite{Mackin2016:Allocating}.
\end{proof}

\begin{restatable}{thm}{thmcharcrnet}\label{thm:char2}
For any arbitrary importance order $O$, under the $O$-legal lexicographic preference domain, an allocation mechanism satisfies strategyproofness, non-bossiness of more important types, and Pareto-optimality if and only if it is an $O$-legal CR-net composed of serial dictatorships.
\end{restatable}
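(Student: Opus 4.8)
The plan is to reduce to the single-type case using \Cref{thm:crnets} and then invoke the characterization of serial dictatorships by strategyproofness, non-bossiness, and Pareto-optimality for the one-type assignment problem, which is the specialization of \citet{Papai01:Strategyproof} to the case where each agent receives exactly one item. For the \emph{if} direction, suppose $f$ is an $O$-legal CR-net all of whose local mechanisms are serial dictatorships. Each such local mechanism is strategyproof and non-bossy, so by the \emph{if} part of \Cref{thm:crnets}, $f$ is strategyproof and non-bossy of more important types. For Pareto-optimality I would prove the auxiliary fact that an $O$-legal CR-net composed of Pareto-optimal local mechanisms is Pareto-optimal under $O$-legal lexicographic preferences, by the same induction on the number of types as in the Pareto-optimality part of the proof of \Cref{thm:localtoglobal}: fixing the Pareto-optimal allocation $[A]_1$ output by the unique local mechanism $f_1$ for the most important type $D_1$ leaves, on the remaining types, an $O$-legal CR-net whose local mechanisms are still Pareto-optimal, and any allocation Pareto-dominating $f(P)$ must either already strictly improve some agent on $D_1$, contradicting Pareto-optimality of $f_1$ after projecting onto $D_1$, or agree with $f(P)$ on $D_1$ for every agent, in which case projecting onto the remaining types contradicts the inductive hypothesis. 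Since serial dictatorships are Pareto-optimal for a single type, $f$ is Pareto-optimal.

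For the \emph{only if} direction, let $f$ be strategyproof, non-bossy of more important types, and Pareto-optimal. By \Cref{thm:crnets}, $f$ is an $O$-legal, locally strategyproof and non-bossy CR-net $\mm$. It therefore suffices to show that every \emph{reachable} local mechanism $\proj{\mm}{D_i}{A}$ --- one for which $\projs{f(Q)}{\pa(D_i)} = A$ for some profile $Q$ --- is Pareto-optimal for the single-type problem over $D_i$; unreachable entries of the $\crt$ are never invoked when computing $f$ and may be replaced by any serial dictatorship without changing $f$. Given local strategyproofness, non-bossiness, and Pareto-optimality, the single-type characterization then forces each reachable $\proj{\mm}{D_i}{A}$ to be a serial dictatorship, so $f$ is an $O$-legal CR-net composed of serial dictatorships.

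The crux is establishing local Pareto-optimality by a lifting argument. Suppose some reachable $\proj{\mm}{D_i}{A}$ is not Pareto-optimal: there is a profile $\bar Q$ over $D_i$ and an allocation $B_i$ of $D_i$ that Pareto-dominates $\proj{\mm}{D_i}{A}(\bar Q)$ with respect to $\bar Q$. Starting from a profile $Q^0$ realizing $A$ at the parents of $D_i$, I would form an $O$-legal lexicographic profile $P$ by keeping every agent's $\cpt$ entries unchanged except for those describing $D_i$, which I replace by the unconditional linear order $\bar Q_j$ (so $D_i$ becomes a source in each agent's dependency graph, preserving $O$-legality). Since in a CR-net the allocation of any type more important than $D_i$ depends only on the allocations and $\cpt$ entries of more important types, none of which were modified, $\projs{f(P)}{\pa(D_i)}$ is still $A$, hence $\projs{f(P)}{D_i} = \proj{\mm}{D_i}{A}(\proj{P}{D_i}{A}) = \proj{\mm}{D_i}{A}(\bar Q)$. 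Replacing the $D_i$-coordinate of $f(P)$ by $B_i$ yields an allocation $A^\ast$ that agrees with $f(P)$ on every type except $D_i$; since each agent's order over $D_i$ is exactly $\bar Q_j$, lexicographicity gives $A^\ast(j) \succeq_j f(P)(j)$ for all $j$, strictly for at least one agent, contradicting Pareto-optimality of $f$. The main obstacle is making this embedding precise --- in particular, verifying that touching only the $D_i$-rows of the $\cpt$s leaves the allocation of all more important types, and hence the parent allocation $A$ that selects the local rule, unchanged, and that the single-type lexicographic comparison at $D_i$ transfers the Pareto domination exactly; both rely on $O$-legality of the CR-net and of the preferences. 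The remaining steps are routine bookkeeping.
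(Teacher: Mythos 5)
Your proposal is correct and follows essentially the same route as the paper: the \emph{if} direction combines \Cref{thm:crnets} with a local-to-global Pareto-optimality argument as in \Cref{thm:localtoglobal}, and the \emph{only if} direction applies \Cref{thm:crnets} and then lifts a local Pareto-domination to a global one (as the paper does, mirroring \Cref{thm:globaltolocal}) before invoking the single-type characterization of \citet{Papai01:Strategyproof}. Your explicit treatment of reachable versus unreachable CRT entries and the careful construction of the lifted $O$-legal profile simply spell out details the paper leaves implicit, rather than constituting a different approach.
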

\begin{proof}
(Sketch)
For a single type, we know that serial dictatorship is characterized by strategyproofness, Pareto-optimality, and non-bossiness~\cite{Papai01:Strategyproof}.
The proof is similar to~\Cref{thm:char}, and uses a similar argument to Theorems~\ref{thm:localtoglobal} and~\ref{thm:globaltolocal}, to show that an $O$-legal CR-net is Pareto-optimal if and only if every local mechanism is Pareto-optimal. The details are provided in the apoendix.
\end{proof}

Finally, we revisit the question of strategyproofness when preferences are not $O$-legal w.r.t. a common importance order. We show in~\Cref{thm:manipulationhard} that even when agents' preferences are restricted to lexicographic preferences, there is a computational barrier against manipulation; determining whether there exists a beneficial manipulation w.r.t. a sequential mechanism is NP-complete for \mtra{}s, even when agents' preferences are lexicographic. Details and the full proof are relegated to the appendix.

\begin{restatable}{dfn}{dfnbm}\label{dfn:bm}
Given an \mtra{} $(N,M,P)$, where $P$ is a profile of lexicographic preferences, and a sequential mechanism $f_O$. in {\sc BeneficialManipulation}, we are asked whether there exists an agent $j$ and an $O$-legal lexicographic preference relation $\succ'_j$ such that $f_O((P_{-j},\succ'_j))(j) \succ_j f_O(P)(j)$.
\end{restatable}

\begin{restatable}{thm}{thmmanipulationhard}\label{thm:manipulationhard}
	{\sc BeneficialManipulation} is NP-complete when preferences are not $O$-legal.
\end{restatable}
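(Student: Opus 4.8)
The plan is to establish both membership in NP and NP-hardness. Membership is the easy direction: given a candidate manipulator $j$ and a witness preference relation $\succ'_j$ (which has polynomial-size description as a lexicographic order, say via an importance order $O_j$ and local orders, or via the underlying CP-net), we run the sequential mechanism $f_O$ on $(P_{-j},\succ'_j)$, which takes polynomial time since each local mechanism is polynomial-time, recover $f_O((P_{-j},\succ'_j))(j)$, and check in polynomial time whether $f_O((P_{-j},\succ'_j))(j) \succ_j f_O(P)(j)$ using agent $j$'s true (lexicographic) preferences. So {\sc BeneficialManipulation} is in NP.

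For hardness I would reduce from a standard NP-complete problem; the natural candidate is a constraint-satisfaction-style problem such as {\sc 3SAT} or {\sc ExactCover}, exploiting the fact that, as \Cref{prop:localspnotOlegal} already hints, when agent $j$'s reported importance order over the types can differ from the order $O$ in which the local mechanisms are applied, the manipulator's choice of a reported order (together with local preferences) amounts to choosing, for each type, a projected preference, and these choices interact through the conditioning of later local mechanisms on earlier partial allocations. I would build an instance with one ``manipulator'' agent $j$ and a collection of ``filler'' agents, one type per variable (or per clause), and local mechanisms that are serial dictatorships whose picking orders are rigged so that: (i) the partial allocation produced in the first few rounds encodes a truth assignment determined by how agent $j$ reports her projected local preferences for those types, and (ii) the final type is allocated by a local mechanism conditioned on the earlier partial allocation in such a way that agent $j$ receives her genuinely most-preferred item of that final type if and only if the encoded assignment satisfies all clauses. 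The key design constraint is that every gadget must be expressible with $O$-legal local mechanisms applied in the fixed order $O$ while agent $j$'s \emph{true} preferences are lexicographic but need not be $O$-legal — this is exactly the regime in which \Cref{prop:localspnotOlegal} shows strategyproofness fails, so there is room for the manipulation landscape to be rich enough to encode satisfiability.

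I would then argue the two directions of correctness: if a satisfying assignment exists, the corresponding report by agent $j$ yields a bundle strictly better (in her true lexicographic order) than her truthful outcome, so a beneficial manipulation exists; conversely, any beneficial manipulation, when restricted to how it determines the partial allocations of the ``variable'' types, must decode to a satisfying assignment, because otherwise (by the rigged picking orders) some clause-type round blocks agent $j$ from the improvement, leaving her no better off than truthful. The main obstacle I anticipate is the second half of this equivalence: ensuring that \emph{every} beneficial manipulation — not just the intended ones — decodes to a valid assignment, i.e. ruling out ``unintended'' manipulations that improve agent $j$'s bundle through some side effect of the gadgets. Handling this will require carefully locking down the filler agents' preferences and the serial-dictatorship priority orders so that the only degrees of freedom agent $j$ has are precisely the assignment-encoding choices, and so that improving any coordinate of her bundle necessarily propagates back to a satisfied-clause condition; I would also need to confirm that the truthful outcome for agent $j$ is a fixed, easily-described bundle so that the comparison $f_O((P_{-j},\succ'_j))(j) \succ_j f_O(P)(j)$ is clean. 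The remaining details — checking all constructed preferences are genuinely lexicographic, bounding the instance size, and verifying the mechanism is well-defined under whatever projection convention (optimistic/pessimistic) is fixed — are routine and would be deferred to the appendix.
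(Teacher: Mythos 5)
Your overall architecture is the same as the paper's: a reduction from 3-SAT in which the types correspond to the Boolean variables plus one final ``checking'' type, a single designated manipulator whose reported local preferences over the variable types encode a truth assignment through the partial allocations of the early rounds, clause agents whose access to good items of the last type is gated by those partial allocations, and the two correctness directions argued as you describe (membership in NP is indeed the easy part; note the witness can simply be the manipulator's $p$ projected local orders, one per round, so representation is not an issue). So the plan is pointed in the right direction.

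However, the proposal stops exactly where the proof actually lives, and the piece you flag as ``the main obstacle'' is not an implementation detail but the missing idea. Concretely, the paper makes the soundness direction work through three specific gadgets that your sketch does not supply: (i) for each literal there is a \emph{cycle} of $t$ agents $l_i^1,\dots,l_i^t$ (one per clause), each of whom most prefers the type-$i$ item named after the next agent in the cycle, so that when the special agent grabs a single item $[l_i^1]_i$ the whole chain shifts and \emph{all} $t$ copies of that literal miss their top type-$i$ item — this is what lets one local choice by the manipulator affect every clause containing that literal; (ii) the literal agents' preferences over the last type are \emph{conditional} on what they received of type $i$ (they take their own item $[l_i^j]_{s+1}$ if satisfied, and a dummy item $[d_i^j]_{s+1}$ otherwise), which is how the encoded assignment opens or blocks the clause agents' satisfying items — note that in the paper's model the last-round local mechanism is a fixed serial dictatorship, and the conditioning happens through agents' non-$O$-legal lexicographic preferences, not through a mechanism ``conditioned on the earlier partial allocation'' as you phrase it; and (iii) the manipulator's true importance order puts the last type first and her truthful outcome is pinned down so that the \emph{only} possible improvement is receiving the single item $[0]_{s+1}$, which is what forces every beneficial manipulation to route through all clause agents being served literal items and hence to decode into a satisfying assignment. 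Without gadgets of this kind, your converse direction (``every beneficial manipulation decodes to a valid assignment'') has no argument behind it — you correctly identify the danger of unintended manipulations but do not resolve it — so as it stands the hardness proof is incomplete.
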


\section{Conclusion and Future Work}
We studied the design of strategyproof sequential mechanisms for \mtra{}s under $O$-legal lexicographic preferences, and showed the relationship between properties of sequential mechanisms and the local mechanisms that they are composed of. In doing so, we obtained strong characterization results showing that any mechanism satisfying strategyproofness, and combinations of appropriate notions of non-bossiness, neutrality, and Pareto-optimality for \mtra{}s must be a sequential composition of local mechanisms. This decomposability of strategyproof mechanisms for \mtra{}s provides a fresh hope for the design of decentralized mechanisms for \mtra{}s and multiple assignment problems. Going forward, there are several interesting open questions such as whether it is possible to design decentralized mechanisms for \mtra{}s that are fair, efficient, and strategyproof under different preference domains. 

\section*{Acknowledgments}
LX acknowledges support from NSF \#1453542 and \#1716333 and a gift fund from Google. YC acknowledges NSFC under Grants 61772035 and 61932001.

\bibliographystyle{plainnat}
\bibliography{references,ref}

\clearpage

\section{Appendix}

\subsection{Proof of~\Cref{thm:localtoglobal}}
\thmlocaltoglobal*
\begin{proof}
	Throughout, we will assume that $O=[D_1\rhd\dots\rhd D_p]$, and that $P$ is an arbitrary $O$-legal preference profile over $p$ types. For any $i\le p$, we define $g_i$ to be the sequential mechanism $(f_1,\dots,f_i)$.
	
	\noindent{\bf anonymity}. It is easy to see that the claim is true when $p=1$. Now, suppose that claim is true for all $p\le k$. Let $P$ be an arbitrary profile over $k+1$ types.
	Let $g=(f_2,\dots,f_{k+1})$. Now, 
	$\Pi(f_O(P)) = (\Pi(f_1(\projs{P}{D_1})),\Pi(g(\proj{P}{D_{\le k+1}\setminus D_1}{\Pi(f_1(\projs{P}{D_1}))})))$, and 
	$f_O(\Pi(P)) = (f_1(\Pi(\projs{P}{D_1})),g(\Pi(\proj{P}{D_{\le k+1}\setminus D_1}{f_1(\Pi(\projs{P}{D_1}))})))$. 
	Since $f_1$ is anonymous, $\Pi(f_1(\projs{P}{D_1})) = f_1(\Pi(\projs{P}{D_1}))$. 
	Therefore, $\proj{P}{D_{\le k+1}\setminus D_1}{\Pi(f_1(\projs{P}{D_1}))} = \proj{P}{D_{\le k+1}\setminus D_1}{f_1(\Pi(\projs{P}{D_1}))}$.
	Then, by the induction assumption, $g$ satisfies anonymity, and we have  $\Pi(g(\proj{P}{D_{\le k+1}\setminus D_1}{\Pi(f_1(\projs{P}{D_1}))})) = g(\Pi(\proj{P}{D_{\le k+1}\setminus D_1}{f_1(\Pi(\projs{P}{D_1}))}))$.
	It follows that $\Pi(f_O(P))= f_O(\Pi(P))$.
	
    \noindent{\bf type-wise neutrality.} We show only the induction step. Suppose that the claim is always true when $p\le k$. Let $P$ be an arbitrary profile over $k+1$ types. Let $g=(f_2,\dots,f_{k+1})$, and $\Pi_{-1}=(\Pi_2,\dots,\Pi_{k+1})$.
    Let $A_1=\Pi_1(f_O(\projs{P}{D_1}))$ and $B_1=f_1(\Pi_1(\projs{P}{D_1}))$.
    Now,
	$\Pi(f_O(P)) = (A_1, \Pi_{-1}(g(\proj{P}{{D_{\le k+1}\setminus D_1}}{A_1})))$, and 
	$f_O(\Pi(P)) = (B_1, g(\Pi_{-1}(\proj{P}{D_{\le k+1}\setminus D_1}{B_1})))$.
	Since $f_1$ is neutral, $A_1 = B_1$.
	Then, $\proj{P}{D_{\le k+1}\setminus D_1}{A_1} = \proj{P}{D_{\le k+1}\setminus D_1}{B_1}$.
	Then, by the induction assumption, $g$ satisfies type-wise neutrality, and $\Pi_{-1}(g(\proj{P}{D_{\le k+1}\setminus D_1}{A_1})) = g(\Pi_{-1}(\proj{P}{D_{\le k+1}\setminus D_1}{B_1}))$.
	It follows that $\Pi(f_O(P)) = f_O(\Pi(P))$.
	
	\noindent{\bf non-bossiness.}
	Let us assume for the sake of contradiction that the claim is false, i.e. there exists a profile $P$, an agent $j$ and a misreport $\succ'_j$ such that for $P' = (\succ_{-j},\succ'_{j})$, $f_O(P)(j) = f_O(P')(j)$, and $f_O(P) \neq f_O(P')$. Then, there is a type $i\le p$ such that, $\projs{f_O(P)}{D_{<i}} = \projs{f_O(P')}{D_{<i}}$ and $\projs{f_O(P)}{D_i} \neq \projs{f_O(P')}{D_i}$. Let $A = \projs{f_O(P)}{D_{<i}}$. Then, there is an agent $k$ such that $f_{i}(\proj{P}{D_i}{A})(k)\neq f_{i}(\proj{P'}{D_i}{A})(k)$. 
	By the choice of $i$, and the assumption that every other agent reports preferences truthfully, $\proj{\succ_j}{D_i}{A}\neq \proj{\succ'_j}{D_i}{A}$. Then, $f_i(\proj{\succ_{-j}}{D_i}{A},\proj{\succ_j}{D_i}{A})(j)= f_i(\proj{\succ_{-j}}{D_i}{A},\proj{\succ'_j}{D_i}{A})(j)$, but $f_i(\proj{\succ_{-j}}{D_i}{A},\proj{\succ_j}{D_i}{A})(k)\neq f_i(\proj{\succ_{-j}}{D_i}{A},\proj{\succ'_j}{D_i}{A})(k)$, a contradiction to our assumption that $f_i$ is non-bossy.

	\noindent{\bf monotonicity.} 
	Let $P'=(P_{-j},\succ'_j)$ be an $O$-legal profile obtained from $P$ and $Y\subseteq\md$ is the set of bundles raising the ranks in $P'$ such that the relative rankings of bundles in $Y$ are unchanged in $P$ and $P'$. 
	For any $Y\subseteq\md$, and any $\vu\in\md_{D_{<i}}$, let $Y^{D_i\mid \vu} = \{x_i: \vx\in Y,x_h=u_h \text{ for all } h\le i-1\}$. 
	It is easy to see that if $\vx_1 = \projs{f_O(P')(j)}{\{D_1\}}$, then it follows from strong monotonicity of $f_1$ that $\vx_1\in\projs{f_O(P)(j)}{\{D_1\}}\cup Y^{D_1}$.
	Now, either $\vx_1\neq\projs{f_O(P)(j)}{\{D_1\}}$, or $\vx_1=\projs{f_O(P)(j)}{\{D_1\}}$. 
	Suppose $\vx_1\neq\projs{f_O(P)(j)}{\{D_1\}}$. 
	Then, by strong monotonicity of $f_1$, $\vx_1\succ\projs{f_O(P)(j)}{\{D_1\}}$. 
	Then, by our assumption of $O$-legal lexicographic preferences, for any $\vz\in\md_{\{D_2,\dots,D_p\}}$, $(\vx_1,\vz)\in Y$. 
	Therefore, $f_O(P')(j)\in Y$. Suppose $\vx_1=\projs{f_O(P)(j)}{\{D_1\}}$, then by a similar argument, $\projs{f_O(P')(j)}{\{D_2\}}\in \{\projs{f_O(P)(j)}{\{D_2\}}\}\cup Y^{D_2\mid (\vx_1)}$. 
	Applying our argument recursively, we get that $f_O(P')(j)\in \{f_O(P)(j)\}\cup Y$.
	
	\noindent{\bf Pareto-optimality.}
	Suppose the claim is true for $p\le k$ types. Let $P$ be an $O$-legal lexicographic profile over $k+1$ types, and $f_O=(f_i)_{i\le k+1}$ is a sequential composition of Pareto-optimal local mechanisms. Suppose for the sake of contradiction that there exists an allocation $B$ such that some agents strictly better off compared to $f_O(P)$, and no agent is worse off. Then, by our assumption of lexicographic preferences, for every agent $k$ who is not strictly better off, $B(k) = f_O(P)(k)$, and for every agent $j$ who is strictly better off, one of two cases must hold. (1) $\projs{B(j)}{D_1}\succ_j \projs{f_O(P)(j)}{D_1}$, or (2) $\projs{B(j)}{D_1} = \projs{f_O(P)(j)}{D_1}$. (1): If there exists an agent such that $\projs{B(j)}{D_1}\succ_j \projs{f_O(P)(j)}{D_1}$, this is a contradiction to our assumption that $f_1$ is Pareto-optimal. (2): Suppose $\projs{B(j)}{D_1}=\projs{f_O(P)(j)}{D_1}$ for all agents who are strictly better off. Let $g=(f_2,\dots,f_{k+1})$. W.l.o.g. let agent $1$ strictly prefer $B(1)$ to $f_O(P)(1)$. Then, $g(\proj{P}{D_{\le k+1}\setminus D_1}{\projs{f_O(P)}{D_1}})(1)\succ_1 \projs{B(1)}{D_{\le k+1}\setminus D_1}$, and for every other agent $l\neq 1$, either $g(\proj{P}{D_{\le k+1}\setminus D_1}{\projs{f_O(P)}{D_1}})(l) \succ_l \projs{B(l)}{D_{\le k+1}\setminus D_1}$, or $g(\proj{P}{D_{\le k+1}\setminus D_1}{\projs{f_O(P)}{D_1}})(l) =  \projs{B(l)}{D_{\le k+1}\setminus D_1}$, which is a contradiction to our induction assumption.
\end{proof}

\subsection{Proof of~\Cref{thm:globaltolocal}}
\thmglobaltolocal*
\begin{proof}
	\noindent{\bf anonymity.} 
	Suppose that for some $k\le p$, $f_k$ does not satisfy anonymity. Then, there exists a profile $P_k$ on $D_k$ such that for some permutation $\Pi$ on agents $f_k(\Pi(P_k) \neq \Pi(f_k(P_k))$. Now, consider the $O$-legal separable lexicographic profile $P$, where for any type $i\le p$, the preferences over type $D_i$ is denoted $\projs{P}{D_i}$ and $\projs{P}{D_k}=P_k$. It is easy to see that, $f_O(\Pi(P)) = (f_i(\Pi(\projs{P}{D_i})))_{i\le p}$, and  
	$\Pi(f_O(P))= \Pi(f_1(\projs{P}{D_1}),\dots,f_p(\projs{P}{D_p})) = (\Pi(f_1(\projs{P}{D_1})),\dots,\Pi(f_p(\projs{P}{D_p}))))$. By anonymity of $f$, $f_O(\Pi(P)) = \Pi(f_O(P))$, which implies that $f_k(\Pi(\projs{P}{D_k})) = \Pi(f_k(\projs{P}{D_k}))$, which is a contradiction.
	
	\noindent{\bf type-wise neutrality.} 
	Suppose that some $k\le p$, $f_k$ does not satisfy neutrality. Then, there exists a profile $P_k$ on $D_k$ such that for some permutation $\Pi_k$ on $D_k$ $f_k(\Pi_k(P_k) \neq \Pi_k(f_k(P_k))$. 
	Now, consider the $O$-legal separable lexicographic profile $P$, where for any type $i\le p$, the preferences over type $D_i$ is denoted $\projs{P}{D_i}$ and $\projs{P}{D_k}=P_k$, and let $\Pi = (\Pi_1,\dots,\Pi_k,\dots,\Pi_p)$ be a permutation over $\md$ by applying $Pi_i$ on $D_i$ for each type $i\le p$.
	$f_O(\Pi(P)) = (f_i(\Pi_i(\projs{P}{D_i})))_{i\le p}$, and $\Pi(f_O(P)) =  (\Pi_1(f_1(\projs{P}{D_1})),\dots,\Pi_p(f_p(\projs{P}{D_p}))))$. By type-wise neutrality of $f_O$, $f_O(\Pi(P)) = \Pi(f_O(P))$. This implies that $f_k(\Pi_k(\projs{P}{D_k})) = \Pi_k(f_k(\projs{P}{D_k}))$, where $\projs{P}{D_k})=P_k$, which is a contradiction.
	
	\noindent{\bf non-bossiness.}
	Assume for the sake of contradiction that $k\le p$ is the most important type such that $f_{k}$ does not satisfy non-bossiness. Then, there exists a preference profile $Q = (\succ^k)_{j \le n}$ over $D_{k}$, and a bossy agent $l$ and a misreport $Q' = (\succ^k_{-l}, \bar\succ^k_{l})$, such that  $f_k(Q')(l) = f_k(Q)(l)$, but $f_{k}(Q') \neq f_{k}(Q)$. Now, consider the $O$-legal separable lexicographic profile $P$, where for any type $i\le p$, the preferences over type $D_i$ is denoted $\projs{P}{D_i}$ and $\projs{P}{D_k}=Q$, and the profile $P'$ obtained from $P$ by replacing $\succ_l$ with $\succ'_l$, which in turn is obtained from $\succ_l$ by replacing $\projs{\succ_l}{D_k}$ with $\bar\succ^k_l$. It is easy to see that $\projs{f_O(P')}{D_{<k}}=\projs{f_O(P)}{D_{<k}}$, and $\projs{f_O(P')(l)}{\{D_k\}} = \projs{f_O(P)(l)}{\{D_k\}}$, but $\projs{f_O(P')}{\{D_k\}} \neq \projs{f_O(P)}{\{D_k\}}$, and by our assumption of separable preferences, $\projs{f_O(P')}{D_{>k}}=\projs{f_O(P)}{D_{>k}}$. This implies that $f_O(P')(l)=f_O(P)(l)$, but $f_O(P')\neq f_O(P)$, implying that $f_O$ does not satisfy non-bossiness, which is a contradiction.
	
	\noindent{\bf monotonicity.} Suppose for the sake of contradiction that $k$ is the most important type for which $f_k$ does not satisfy monotonicity. Then, there exists a profile $Q=(\succ^k_j)_{j\le n}$ of linear orders over $D_k$, such that for some agent $j$, $\bar\succ^k_l$ obtained from $\succ^k_l$ by raising the rank of a set of items $Z\subseteq D_k$ without changing their relative order, $f_k((Q{-l},\bar\succ_l))(l)\not\in \{f_k(Q)(l)\}\cup Z$. Now, consider the $O$-legal separable lexicographic profile $P$, where for any type $i\le p$, the preferences over type $D_i$ is denoted $\projs{P}{D_i}$ and $\projs{P}{D_k}=Q$, and the profile $P'$ obtained from $P$ by replacing $\succ_l$ with $\succ'_l$, which in turn is obtained from $\succ_l$ by replacing $\projs{\succ_l}{D_k}$ with $\bar\succ^k_l$. It is easy to see that $\projs{f_O(P')}{D_{<k}}=\projs{f_O(P)}{D_{<k}}$, and $\projs{f_O(P')(l)}{D_k}\notin \projs{f_O(P)(l)}{D_k}\cup Z$. By our assumption of $O$-legal separable lexicographic preferences, this implies that $f_O$ does not satisfy monotonicity, which is a contradiction.
	
    \noindent{\bf Pareto-optimality.} 
	Suppose that some $k\le p$, $f_k$ does not satisfy Pareto-optimality. Then, there exists a profile $P_k$ such that $f_k(P_k)$ is Pareto-dominated by an allocation $B$ of $D_k$. Now, consider the $O$-legal separable lexicographic profile $P$, where for any type $i\le p$, the preferences over type $D_i$ is denoted $\projs{P}{D_i}$ and $\projs{P}{D_k}=P_k$. Then, $f_O(P) = (f_i(\proj{P}{D_i}))_{i\le p}$ is Pareto-dominated by the allocation $B$ of all types, where for all types $i\neq k$, $\projs{B}{D_i} = f_i(\projs{P}{D_i})$, and $\projs{B}{D_k} = A$, which is a contradiction to the assumption that $f_O$ is Pareto-optimal.
\end{proof}

\subsection{Proof of~\Cref{thm:char2}}
\thmcharcrnet*
\begin{proof}
Let $O=[D_1\rhd D_2\rhd\cdots\rhd D_p]$.
Under single type, we know that serial dictatorship is characterized by strategyproofness, Pareto-Optimality, and non-bossiness~\cite{Papai01:Strategyproof}.
Let $P=(\succ_j)_{j\le n}$ be an arbitrary $O$-legal lexicographic preference profile.

\noindent{$\Rightarrow$}: Let $f$ be an $O$-legal CR-net. From~\Cref{thm:crnets} we know that if each local mechanism of $f$ satisfies strategyproofness and non-bossiness, then $f$ satisfies strategyproofness and non-bossiness of more important types.

We now prove that if each local mechanism is Pareto-Optimal, then $f$ is Pareto-optimal, similarly to~\Cref{thm:localtoglobal}.
Suppose for the sake of contradiction that $f$ is not Pareto-optimal, i.e. for some $P$, the allocation $B=(B_i)_{i\le p}$ Pareto-dominates $f(P)=A=(A_i)_{i\le p}$.
Let $i$ be the most important type that $A$ and $B$ and different allocuation, and we have $A_{<i}=B_{<i}$ and $B_i$ Pareto-dominates $A_i$.
Let $P_i=\proj{P}{D_i}{A_{<i}}$.
However, by the assumption that $f$ is a CR-net, we know that $A_i=\proj{f}{D_i}{A_{<i}}(P_i)$ is Pareto-optimal, i.e. $A_i$ does not Pareto-dominated by $B_i$, which is a contradiction.

Therefore if each local mechanism of $f$ is a serial dictatorship, which implies that it satisfies strategyproofness, non-bossiness, and Pareto-optimality, then $f$ satisfies strategyproofness, non-bossiness of more important types, and Pareto-optimality.

\noindent{$\Leftarrow$}:
Let $f$ be a mechanism for $O$-legal lexicographic preferences.
With Theorem~\ref{thm:crnets}, we have that if $f$ satisfies strategyproofness and non-bossiness of more important types, then it is an $O$-legal strategyproof and non-bossy CR-net.
We also have that if $f$ is a CR-net satisfying Pareto-optimality, then each local mechanism is also Pareto-optimal with a similar proof to Theorem~\ref{thm:globaltolocal}.
Together we have that if $f$ satisfies strategyproofness, non-bossiness of more important types, and Pareto-optimality, then $f$ is an $O$-legal CR-net and each local mechanism satisfies strategyproofness, non-bossiness, and Pareto-Optimality, which implies that it is a serial dictatorship~\cite{Papai01:Strategyproof}.
%
%
\end{proof}

\subsection{Proof of~\Cref{thm:manipulationhard}}
\thmmanipulationhard*
\begin{proof}
	We show a reduction from 3-SAT. In an instance $I$ of 3-SAT involving $s$ Boolean variables $\{x_1,\dots,x_s\}$, and a formula $\mf$ involving $t$ clauses $\{c_1,\dots,c_t\}$ in 3-CNF, we are asked if 
	$\mf$ is satisfiable. Given such an arbitrary instance $I$ of 3-SAT, we construct an instance $J$ of {\sc BeneficialManipulation} in polynomial time. We will show that $I$ is a Yes instance of 3-SAT if and only if $J$ is a Yes instance of {\sc BeneficialManipulation}. For each $j\le t$, we label the three literals in clause $j$ as $l_{j_1}^j,l_{j_2}^j$, and $l_{j_3}^j$ where $j_1 < j_2 < j_3$. We construct instance $J$ of {\sc BeneficialManipulation} to have:
	
	\noindent{\bf Types:} $s+1$ types.
	
	\noindent{\bf Agents:}
	\begin{itemize}[wide,labelindent=0pt]
		\item For every variable $i\le s$, and every clause $j\le t$, two agents $0_i^j, 1_i^j$, and a dummy agent $d_i^j$.
		\item For every clause $j$, an agent $c_j$.
		\item A special agent $0$.
	\end{itemize}
	
	\noindent{\bf Items:} For every agent $a$ and every type $k\le s+1$, an item named $[a]_k$.
	
	\noindent{\bf Preferences:} For some agents, we only specify their importance orders (or local preferences) over types (or items) that are important for this proof, and assume that their preferences are an arbitrary linear order with the specified preferences over the top few types (or items).
	\begin{itemize}[wide,labelindent=0pt]
		\item {\bf agent $0$} has importance order $s+1 \rhd \others$, with local preferences:
		\begin{itemize}[labelindent=5pt]
			\item type $s+1$: $[0]_{s+1} \succ [c_1]_{s+1} \succ \others$
			\item every other type $k< s+1$: $[0]_{k} \succ \others$
		\end{itemize}
		
		\item {\bf agents $l_i^j$}, $l \in \{0,1\}$ have importance order $i \rhd \others$.
		\begin{itemize}[labelindent=5pt]
			\item type $i$: $\nxt_i^j \succ l_i^j \succ 0_i \succ \others$, where $\nxt_i^j = [l_i^{j+1}]_i$ if $j < t$, and $\nxt_i^j = [l_i^1]_i$ if $j=t$.
			\item type $s+1$ preferences are conditioned on assignment on type $i$.
			\begin{itemize}
				\item $D_i\setminus\{\nxt_i^j$\}: $[d_i^j]_i \succ \others$.
				\item $\nxt_i^j$: $[l_i^j]_i \succ \others$.
			\end{itemize}
			\item every other type $k$: $[l_i^j]_k \succ \others$.
		\end{itemize}
		
		\item {\bf agents $d_i^j$} have importance order $i \rhd \others$.
		\begin{itemize}[labelindent=5pt]
			\item type $i$: $[d_i^j]_i \succ \others$.
		\end{itemize}

		\item {\bf agents $c_j$} have importance order $s+1 \rhd \others$.
		\begin{itemize}[labelindent=5pt]
			\item type $s+1$: $[l_{j1}^j]_{s+1} \succ [l_{j2}^j]_{s+1} \succ [l_{j3}^j]_{s+1} \succ [0]_{s+1} \succ \others$.
			\item every other type $k$: $[c_j]_k \succ \others$.
		\end{itemize}
	\end{itemize}
	
	\noindent{\bf Sequential mechanism:} composed of serial dictatorships applied in the order $O=1 \rhd \dots \rhd s+1$, where the priority orders over agents are:
	\begin{itemize}[wide,labelindent=0pt]
		\item for types $i \le s$: $(\others, 0, 0_i^t, \dots, 0_i^1, 1_i^t, \dots, 1_i^1)$.
		\item type $s+1$: $(0_1^1, \dots 0_1^t, 0_2^1, \dots, 0_s^t, 1_1^1, \dots, 1_1^t, 1_2^1, \dots, 1_s^t, c_1, \dots, c_t, 0,\allowbreak \others)$.
	\end{itemize}
	Similar to preferences, we only specify the part of the priority orderings over the agents for each serial dictatorship that is relevant to the proof, and assume that the priority orderings are linear orders over the agents, where the specified part holds.
	
	The main idea is that if the 3-SAT instance is satisfiable, special agent $0$ enables every $c_j$ agent to get an item of type $s+1$ corresponding to a literal $l_i^j$ that satisfies the clause $j$ by a beneficial manipulation which results in agents $l_i^j$ corresponding to literals in clause $j$ being allocated their favorite item of type $i$. 
	
	When agents report preferences truthfully and are either optimistic or pessimistic, it is easy to check $f_O$ allocates items as follows: for types $i<s+1$ agent $0$ gets $0_i$ and every agent $l_i^j$ gets $\nxt_i^j$. Then, for type $s+1$, for any $l\in\{0,1\}$, every agent $l_i^j$ gets the item $[l_i^j]_{s+1}$. This in turn makes it so that for every $i\le s, j\le t$, the items $[l_i^j]_{s+1}$ unavailable to the agent $c_j$. Then, $c_1$ gets $[0]_{s+1}$, and finally, $0$ gets $[c_1]_{s+1}$. 
	
	Upon examining agent $0$'s preferences, it is easy to check that the only way for agent $0$ to improve upon this allocation is to receive a better item of type $s+1$, specifically, item $[0]_{s+1}$.
	
	\noindent$\Rightarrow$	Let $\phi$ be a satisfying assignment for instance $I$. Consider the manipulation where agent $0$ reports her top item of type $i$ to be $[0_i^1]_i$ if $\phi_i = 0$, and $[1_i^1]_i$ if $\phi_i = 1$.
	Now, suppose that every other agent reports preferences truthfully. 
	
	Let us consider the case where for some $i\le s$, $\phi_i=0$. It is easy to check that for type $i$, agents' allocations are as follows: 
	Agent $0$ gets $[0_i^1]_i$ if $\phi_i=0$, and in the sequence $j=t\dots2$ agents $0_i^j$ get items $[0_i^j]_i$ respectively, and agent $0_i^1$ gets $[0]_i$, i.e. none of the agents $0_i^j$ gets their corresponding top item $\nxt_i^j$. 
	Now, for type $s+1$, agent $0_i^j$ gets $[d_i^j]_{s+1}$ according to their true preferences since they did not receive their item $\nxt_i^j$ of type $i$, leaving item $[0_i^j]_{s+1}$ available. Then, agents $1_i^j$ get the items $[1_i^j]_{s+1}$, crucially, before agents $c_j$ get to choose any item. Then for every agent $c_j$, if $0_{i^*}^j$ is the literal with the lowest index $i^*$ such that $\phi_{i^*}$ corresponds to a satisfying assignment of clause $c_j$, $[0_{i^*}^j]_{s+1}$ must be available when $c_j$ gets her turn to pick an item, and gets it. Moreover, since $\phi$ is a satisfying assignment, there is such an item for every $c_j$. This leaves $[0]_{s+1}$ available when it is agent $0$'s turn to pick an item. Thus, special agent $0$ prefers the resulting allocation to her allocation when she picked items truthfully, and the manipulation was beneficial, irrespective of whether agent $0$ is optimistic or pessimistic.
	
	\noindent$\Leftarrow$ Suppose agent $0$ has a beneficial manipulation. Then, as we have already established, agent $0$ must get item $[0]_{s+1}$ as a result of the manipulation. Now, agents $c_j$ get their turn to pick an item before agent $0$ in the serial dictatorship for type $s+1$. Then, since they are truthful, each agent $c_j$ receives an item $[l_i^j]_{s+1}$ corresponding to a satisfying assignment. Otherwise one of them must get $[0]_{s+1}$, a contradiction. 
	
	Let us construct an assignment $\phi$ as follows: if $c_j$ gets item $[0_{i^*}^j]_{s+1}$ in the final allocation, set $\phi_i = 0$, and set $\phi_i =1$ otherwise. We will show that $\phi$ is a satisfying assignment for $I$. Since agents $0_i^j$ and $1_i^j$ come before agents $c_j$ in the serial dictatorship, and are also truthful, it must be that for every item $[l_{i^*}^j]_{s+1}$ allocated to agent $c_j$ in the final allocation, the corresponding agent $l_{i^*}^j$ does not get $\nxt_{i^*}^j$ of type $i^*$. 
	
	By construction of the preferences over type $i^*$ and the serial dictatorship for type $i^*$, it must be that special agent $0$ picks an item $[l_{i^*}^{k}]_{i^*}$, where either $k > j$ or $k = 1$. It is easy to check from the construction that if this is not the case, agent $l_i^j$ can pick item $\nxt_i^j$ when every agent other than $0$ picks truthfully. Further, if agent $0$ picks some item $[0_{i^*}^{\hat{j}}]_{i^*}$, it is easy to check that by the construction of the preferences, every agent other than $1_{i^*}^j$ gets their top item. Thus, none of the agents $c_j$ may receive the item $[1_{i^*}^j]$ since it must already have been picked by the agent $1_{i^*}^j$ in the serial dictatorship. Together with the fact that every agent $c_j$ receives an item that corresponds to a satisfying assignment, $\phi$ constructed above is a satisfying assignment for instance $I$.
\end{proof}

\end{document}